\def\UrlSpecials{\do\~{\kern -.15em\lower .7ex\hbox{~}\kern .04em}} \catcode`~=13 
\newcommand{\calB}{\mathcal{B}}
\newcommand{\calD}{\mathcal{D}}
\newcommand{\calM}{\mathcal{M}}
\newcommand{\calN}{\mathcal{N}}
\newcommand{\calS}{\mathcal{S}}
\newcommand{\calV}{\mathcal{V}}
\newcommand{\calX}{\mathcal{X}}
\newcommand{\bA}{\mathbf{A}}
\newcommand{\bb}{\mathbf{b}}
\newcommand{\bI}{\mathbf{I}}
\newcommand{\bv}{\mathbf{v}}
\newcommand{\bW}{\mathbf{W}}
\newcommand{\bx}{\mathbf{x}}
\newcommand{\by}{\mathbf{y}}
\newcommand{\bz}{\mathbf{z}}
\newcommand{\rmF}{\mathrm{F}}
\newcommand{\bbE}{\mathbb{E}}
\newcommand{\bbP}{\mathbb{P}}
\newcommand{\bbR}{\mathbb{R}}
\DeclareMathAlphabet{\mathbsf}{OT1}{cmss}{bx}{n}
\DeclareMathAlphabet{\mathssf}{OT1}{cmss}{m}{sl}
\DeclareSymbolFont{bsfletters}{OT1}{cmss}{bx}{n}  
\DeclareSymbolFont{ssfletters}{OT1}{cmss}{m}{n}
\DeclareMathSymbol{\bsfGamma}{0}{bsfletters}{'000}
\DeclareMathSymbol{\ssfGamma}{0}{ssfletters}{'000}
\DeclareMathSymbol{\bsfDelta}{0}{bsfletters}{'001}
\DeclareMathSymbol{\ssfDelta}{0}{ssfletters}{'001}
\DeclareMathSymbol{\bsfTheta}{0}{bsfletters}{'002}
\DeclareMathSymbol{\ssfTheta}{0}{ssfletters}{'002}
\DeclareMathSymbol{\bsfLambda}{0}{bsfletters}{'003}
\DeclareMathSymbol{\ssfLambda}{0}{ssfletters}{'003}
\DeclareMathSymbol{\bsfXi}{0}{bsfletters}{'004}
\DeclareMathSymbol{\ssfXi}{0}{ssfletters}{'004}
\DeclareMathSymbol{\bsfPi}{0}{bsfletters}{'005}
\DeclareMathSymbol{\ssfPi}{0}{ssfletters}{'005}
\DeclareMathSymbol{\bsfSigma}{0}{bsfletters}{'006}
\DeclareMathSymbol{\ssfSigma}{0}{ssfletters}{'006}
\DeclareMathSymbol{\bsfUpsilon}{0}{bsfletters}{'007}
\DeclareMathSymbol{\ssfUpsilon}{0}{ssfletters}{'007}
\DeclareMathSymbol{\bsfPhi}{0}{bsfletters}{'010}
\DeclareMathSymbol{\ssfPhi}{0}{ssfletters}{'010}
\DeclareMathSymbol{\bsfPsi}{0}{bsfletters}{'011}
\DeclareMathSymbol{\ssfPsi}{0}{ssfletters}{'011}
\DeclareMathSymbol{\bsfOmega}{0}{bsfletters}{'012}
\DeclareMathSymbol{\ssfOmega}{0}{ssfletters}{'012}
\newcommand{\btheta}{\bm{\theta}}
\newcommand{\bmeta}{\bm{\eta}}
\DeclareMathOperator*{\argmin}{arg\,min}
\theoremstyle{plain}
\newtheorem{theorem}{Theorem} 
\newtheorem{lemma}{Lemma}
\newtheorem{corollary}{Corollary}
\newtheorem{remark}{Remark}
\newcommand{\qednew}{\nobreak \ifvmode \relax \else
      \ifdim\lastskip<1.5em \hskip-\lastskip
      \hskip1.5em plus0em minus0.5em \fi \nobreak
      \vrule height0.75em width0.5em depth0.25em\fi}
\begin{document}
    
\title{Information-Theoretic Lower Bounds for Compressive Sensing with Generative Models}

\author{Zhaoqiang Liu and Jonathan Scarlett

\thanks{The authors are with the  Department of Computer Science, School of Computing, National University of Singapore (NUS). Jonathan Scarlett is also with the Department of Mathematics, NUS. Emails: \url{dcslizha@nus.edu.sg};  \url{scarlett@comp.nus.edu.sg}

This work was supported by the Singapore National Research Foundation (NRF) under grant number R-252-000-A74-281.}}

\maketitle

\begin{abstract}
    Recently, it has been shown that for compressive sensing, significantly fewer measurements may be required if the sparsity assumption is replaced by the assumption the unknown vector lies near the range of a suitably-chosen generative model.  In particular, in (Bora {\em et al.}, 2017) it was shown roughly $O(k\log L)$ random Gaussian measurements suffice for accurate recovery when the generative model is an $L$-Lipschitz function with bounded $k$-dimensional inputs, and $O(kd \log w)$ measurements suffice when the generative model is a $k$-input ReLU network with depth $d$ and width $w$.  In this paper, we establish corresponding algorithm-independent lower bounds on the sample complexity using tools from minimax statistical analysis.  In accordance with the above upper bounds, our results are summarized as follows: (i) We construct an $L$-Lipschitz generative model capable of generating group-sparse signals, and show that the resulting necessary number of measurements is $\Omega(k \log L)$; (ii) Using similar ideas, we construct ReLU networks with high depth and/or high depth for which the necessary number of measurements scales as $\Omega\big( kd \frac{\log w}{\log n}\big)$ (with output dimension $n$), and in some cases $\Omega(kd \log w)$.  As a result, we establish that the scaling laws derived in (Bora {\em et al.}, 2017) are optimal or near-optimal in the absence of further assumptions.
\end{abstract}

\section{Introduction}

Over the past 1--2 decades, tremendous research effort has been placed on theoretical and algorithmic studies of high-dimensional linear inverse problems \cite{Fou13,wainwright2019high}.  The prevailing approach has been to model low-dimensional structure via assumptions such as sparsity or low rankness, and numerous algorithmic approaches have been shown to be successful, including convex relaxations \cite{Tro06,Wai09a}, greedy methods \cite{Tro04,Wen2016}, and more.  The problem of sparse estimation via linear measurements (commonly referred to as {\em compressive sensing}) is particularly well-understood, with theoretical developments including sharp performance bounds for both practical algorithms \cite{Wai09a,Don13,Ame14,Wen2016} and (potentially intractable) information-theoretically optimal algorithms \cite{Wai09,Ari13,Can13,Sca15}.

Following the tremendous success of deep generative models in a variety of applications \cite{Fos19}, a new perspective on compressive sensing was recently introduced, in which the sparsity assumption is replaced by the assumption of the underlying signal being well-modeled by a generative model (typically corresponding to a deep neural network) \cite{Bor17}.  This approach was seen to exhibit impressive performance in experiments, with reductions in the number of measurements by factors such as $5$ to $10$ compared to sparsity-based methods.

In addition, the authors of \cite{Bor17} provided theoretical guarantees on their proposed algorithm, essentially showing that an $L$-Lipschitz generative model with bounded $k$-dimensional inputs leads to reliable recovery with $m = O(k \log L)$ random Gaussian measurements (see Section \ref{sec:overview} for a precise statement).  Moreover, for a ReLU network (see Appendix \ref{sec:relu_defs} for brief definitions) generative model from $\bbR^k$ to $\bbR^n$ with width $w$ and depth $d$, it suffices to have $m = O(kd \log w)$.  Some further related works are outlined below.

In this paper, we address a prominent gap in the existing literature by establishing {\em algorithm-independent lower bounds} on the number of measurements needed.
Using tools from minimax statistical analysis, we show that for generative models satisfying the assumptions of \cite{Bor17}, the above-mentioned dependencies $m = O(k \log L)$ and $m = O(kd \log w)$ are tight or near-tight in the absence of further assumptions.  Our argument is based on a reduction to compressive sensing with a {\em group sparsity} model (e.g., see \cite{Bar10}), i.e., forming a generative model that is capable of producing such signals.

\subsection{Related Work}

The above-mentioned work of Bora {\em et al.} \cite{Bor17} performed the theoretical analysis assuming that one can find an input to the generative model minimizing an empirical loss function up to a given additive error (see Theorem \ref{thm:bora1} below).  The analysis is based on showing that Gaussian random matrices satisfy a natural counterpart to the Restricted Eigenvalue Condition (REC) termed the {\em Set-Restricted Eigenvalue Condition} (S-REC).
In practice, minimizing the empirical loss function may be hard, so it was proposed to use gradient descent in the latent space (i.e., the input space of the generative model). 

A variety of follow-up works provided additional theoretical guarantees for compressive sensing with generative models. For example, instead of using a direct gradient descent algorithm as in~\cite{Bor17}, the authors of~\cite{Sha18,jalali2019using} provide recovery guarantees for a projected gradient descent algorithm under various assumptions, where the gradient steps are instead taken in the ambient space (i.e., the output space of the generative model)

In~\cite{Bor17}, the recovered signal is assumed to lie in (or be close to) the range of the generative model $G(\cdot)$, which poses limitations for cases that the true signal is further from the range of $G(\cdot)$. To overcome this problem, a more general model allowing sparse deviations from the range of $G(\cdot)$ was introduced and analyzed in~\cite{Dha18}. 

In the case that the generative model is a ReLU network, under certain assumptions on the layer sizes and randomness assumptions on the weights, the authors of~\cite{Han18} show that the non-convex objective function of empirical risk minimization does not have any spurious stationary points, and accordingly establish a simple gradient-based algorithm that is guaranteed to find the global minimum.

Sample complexity upper bounds are presented in \cite{Wei19} for various generalizations of the original model in \cite{Bor17}, namely, non-Gaussian measurements, non-linear observations models, and heavy-tailed noise.  Another line of works considers compressive sensing via {\em untrained} neural network models \cite{Van18,Hec18}, but these are less relevant to the present paper.

Despite this progress, to the best of our knowledge, there is no previous literature providing {\em algorithm-independent lower bounds} on the number of measurements needed,\footnote{See Section \ref{sec:contr} for the discussion of a concurrent work \cite{Kam19}.} and without these, it is unclear to what extent the existing results can be improved.
On the other hand, algorithm-independent lower bounds are well-established for sparsity-based compressive sensing \cite{Can13,Duc13,Wai09,Aer10,Akc10,Ree13,Aks17,Sca15}, and parts of our analysis will build on the techniques in these works.

\subsection{Contributions} \label{sec:contr}

In this paper, we establish information-theoretic lower bounds that certify the optimality or near-optimality of the above-mentioned upper bounds from \cite{Bor17}.  More specifically, our main results are summarized as follows:\footnote{Throughout the paper, we use the standard asymptotic notation $O$, $o$, $\Omega$, $\omega$, and $\Theta$.}
\begin{itemize}
    \item In Section \ref{sec:lower}, we construct a bounded $k$-input $L$-Lipschitz generative model capable of generating group-sparse signals, and show that the resulting necessary number of measurements for accurate recovery is $\Omega(k \log L)$. 
    \item In Section \ref{sec:relu}, building on the preceding ideas, we construct ReLU networks whose width $w$ and/or depth $d$ are large, requiring $\Omega\big( kd \frac{\log w}{\log n}\big)$ measurements (where $n$ is the output dimension), and in some cases requiring $\Omega(kd \log w)$ measurements.
\end{itemize}
Note that these results are only summarized informally here; see the relevant sections for formal statements, in particular Theorems \ref{thm:main}, \ref{thm:main2}, and \ref{thm:wide_deep}.

In concurrent work \cite{Kam19}, a closely-related result to ours was proved for the bounded Lipschitz setting.  There are some slight differences, in that we consider noisy measurements for signals with no representation error, whereas \cite{Kam19} considers signals with representation error but no output noise.\footnote{These two settings can be brought closer together by replacing our output noise model $\by = \bA\bx + \bz$ by an input noise model of the form  $\by = \bA(\bx + \bz)$.  For Gaussian noise, the two are in fact equivalent up to re-scaling of the noise variance when $\bA$ has orthogonal rows of equal $\ell_2$-norm, and it was argued in \cite{Kam19} that such a restriction on $\bA$ in fact bears no loss of generality.}  The results are, however, closely related, and both can be interpreted as proving the tightness of \cite{Bor17}.  Interestingly, the analysis techniques are quite different and complementary to each other, with \cite{Kam19} using a reduction based on communication complexity.  To our knowledge, our lower bound for deep and/or wide ReLU networks is unique to the present paper.

\section{Problem Setup and Overview of Upper Bounds} \label{sec:overview}

In this section, we formally introduce the problem, and overview one of the main results of \cite{Bor17} giving an upper bound on the sample complexity for Lipschitz-continuous generative models, so as to set the stage for our algorithm-independent lower bounds.  

Compressive sensing aims to reconstruct an unknown vector $\bx^*$ from a number of noisy linear measurements of the form $\by = \bA \bx^* + \bmeta$ (formally defined below).  In~\cite{Bor17}, instead of making use of the common assumption that $\bx^*$ is $k$-sparse \cite{Fou13}, the authors assume that $\bx^*$ is close to some vector in the range of a generative function $G(\cdot)$.  We adopt the same setup as that in \cite{Bor17}, but for convenience we consider both rectangular and spherical input domains (whereas \cite{Bor17} focused on the latter).  In more detail, the setup is as follows:
\begin{itemize}
    \item A {\em generative model} is a function $G \,:\, \calD\to \bbR^n$, with latent dimension $k$, ambient dimension $n$, and input domain $\calD \subseteq \bbR^k$.
    \item When the signal to be estimated is $\bx^* \in \bbR^n$, the observed vector is given by
    \begin{equation}
        \by = \bA \bx^* + \bmeta,
    \end{equation}
    where $\bA \in \bbR^{m\times n}$ is the measurement matrix, $\bmeta \in \bbR^m$ is the noise vector, and $m$ is the number of measurements.  For now $\bx^*$ is arbitrary, but should be thought of as being close to $G(\bz)$ for some $\bz \in \calD$.
    \item We define the $\ell_2$-ball $B_2^k(r):=\{\bz \in \bbR^k: \|\bz\|_2 \le r\}$, and the $\ell_{\infty}$-ball $B_{\infty}^k(r):=\{\bz \in \bbR^k: \|\bz\|_{\infty} \le r\}$.  We will focus primarily on the case that the input domain $\calD$ is of one of these two types, and we refer to these cases as {\em spherical domains} and {\em rectangular domains}. 
    \item For $S \subseteq \calD \subseteq \bbR^k$, we define 
    \begin{equation}
        G(S) = \{ \bx \in \bbR^n \,:\, \bx = G(\bz) \text{ for some }\bz \in S \}.
    \end{equation}
    When $S = \calD$ is the domain of $G$, we also use the notation ${\rm Range}(G) = G(\calD)$, which we call the {\em range} of the generative model.
\end{itemize}

One of the two main results in \cite{Bor17} is the following, providing general recovery guarantees for compressive sensing with generative models and Gaussian measurements.

\begin{theorem} {\em (Upper Bound for Spherical Domains \cite[Thm.~1.2]{Bor17})} \label{thm:bora1}
    Fix $r > 0$, let $G \,:\, B_2^k(r) \rightarrow \bbR^n$ be an $L$-Lipschitz function, and let $\bA \in \bbR^{m \times n}$ be a random measurement matrix whose entries are i.i.d.~with distribution $A_{ij} \sim \calN(0,\frac{1}{m})$.  Given the observed vector $\by = \bA\bx^* + \bmeta$  for some $\bx^* \in \bbR^n$, let $\hat{\bz}$ minimize $\|\by - \bA G(\bz)\|_2$ to within additive error $\epsilon$ of the optimum over $\bz \in B_2^k(r)$. Then, for any number of measurements satisfying $m \ge C k\log \frac{Lr}{\delta}$ for a universal constant $C$ and any $\delta > 0$, the following holds with probability $1-e^{-\Omega(m)}$:
    \begin{equation}\label{eq:upper_bds}
        \|G(\hat{\bz})-\bx^*\|_2 \le 6 \min_{\bz^* \in B_2^k(r)} \|G(\bz^*)-\bx^*\|_2 + 3\|\bmeta\|_2 + 2\epsilon + 2\delta.
    \end{equation}
\end{theorem}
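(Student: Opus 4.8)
The plan is to follow the standard two-step strategy for recovery-from-a-random-sketch results. First I would isolate a deterministic condition on $\bA$ --- a set-restricted eigenvalue (S-REC) condition --- under which every $\epsilon$-approximate minimizer $\hat\bz$ satisfies the claimed bound; then I would show that a Gaussian $\bA$ with $m=\Omega\!\big(k\log\tfrac{Lr}{\delta}\big)$ rows satisfies this condition with probability $1-e^{-\Omega(m)}$. Throughout, write $\mathcal{M}:=G(B_2^k(r))\subseteq\bbR^n$, put $\bar\bz:=\argmin_{\bz\in B_2^k(r)}\|G(\bz)-\bx^*\|_2$ and $\bar\bx:=G(\bar\bz)$, and put $\hat\bx:=G(\hat\bz)$.

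The deterministic condition I would use is: for some constant $\gamma\in(0,1)$, say $\gamma=\tfrac34$,
\[
\|\bA(\bx_1-\bx_2)\|_2\ \ge\ \gamma\|\bx_1-\bx_2\|_2-\delta\qquad\text{for all }\bx_1,\bx_2\in\mathcal{M},
\]
together with a one-sided estimate $\|\bA(\bx^*-\bar\bx)\|_2\le\tfrac98\|\bx^*-\bar\bx\|_2$ on the single ($\bA$-independent) vector $\bx^*-\bar\bx$. Granting these, the bound follows by chaining triangle inequalities: $\epsilon$-near-optimality of $\hat\bz$ gives $\|\by-\bA\hat\bx\|_2\le\|\by-\bA\bar\bx\|_2+\epsilon$, whence, using $\by=\bA\bx^*+\bmeta$, $\|\bA(\hat\bx-\bx^*)\|_2\le\|\bA(\bx^*-\bar\bx)\|_2+2\|\bmeta\|_2+\epsilon$; then $\|\bA(\hat\bx-\bar\bx)\|_2\le\|\bA(\hat\bx-\bx^*)\|_2+\|\bA(\bx^*-\bar\bx)\|_2\le 2\|\bA(\bx^*-\bar\bx)\|_2+2\|\bmeta\|_2+\epsilon$; and feeding the S-REC lower bound and the one-sided upper bound into this gives $\|\hat\bx-\bar\bx\|_2\le\tfrac1\gamma\big(\tfrac94\|\bx^*-\bar\bx\|_2+2\|\bmeta\|_2+\epsilon+\delta\big)$. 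A final triangle inequality $\|\hat\bx-\bx^*\|_2\le\|\hat\bx-\bar\bx\|_2+\|\bar\bx-\bx^*\|_2$ then gives coefficients $\big(1+\tfrac{9}{4\gamma},\tfrac2\gamma,\tfrac1\gamma,\tfrac1\gamma\big)$ on $\big(\min_{\bz^*}\|G(\bz^*)-\bx^*\|_2,\|\bmeta\|_2,\epsilon,\delta\big)$, which for $\gamma=\tfrac34$ are dominated by the stated $6,3,2,2$.

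For the probabilistic step I would establish the S-REC on $\mathcal{M}$ by covering. Since $G$ is $L$-Lipschitz on $B_2^k(r)$, a Euclidean $\rho$-net $T$ of $B_2^k(r)$ has $|T|\le(1+2r/\rho)^k$, and its image $G(T)$ is an $(L\rho)$-net of $\mathcal{M}$. For any fixed $\bv$, Gaussian concentration gives $\Pr\!\big[\|\bA\bv\|_2\notin[\tfrac78,\tfrac98]\|\bv\|_2\big]\le e^{-cm}$; a union bound over the at most $|T|^2$ pairwise differences of points of $G(T)$ (and over the finitely many fixed vectors used above) shows that once $m\ge Ck\log(1+2r/\rho)$, the two-sided estimate holds on $G(T)-G(T)$ with probability $1-e^{-\Omega(m)}$. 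One then transfers to arbitrary $\bx_1,\bx_2\in\mathcal{M}$: writing $\bx_i'\in G(T)$ for nearest net images (so $\|\bx_i-\bx_i'\|_2\le L\rho$), a triangle inequality lower-bounds $\|\bA(\bx_1-\bx_2)\|_2$ by $\tfrac78\|\bx_1-\bx_2\|_2-\tfrac74 L\rho$ minus the transfer error $\|\bA(\bx_1-\bx_1')\|_2+\|\bA(\bx_2-\bx_2')\|_2$.

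The main obstacle is controlling the transfer error $\|\bA(\bx_1-\bx_1')\|_2+\|\bA(\bx_2-\bx_2')\|_2$: the vectors $\bx_i-\bx_i'$ form a continuum, so per-vector concentration does not apply to them directly and one cannot naively union bound. The clean remedy is to take $T$ fine enough that this error is at most $\delta/4$: combining $\|\bx_i-\bx_i'\|_2\le L\rho$ with a crude operator-norm bound $\|\bA\|_{\mathrm{op}}=O(\sqrt{n})$ (which holds with probability $1-e^{-\Omega(n)}$), it suffices to take $\rho$ of order $\delta/(L\|\bA\|_{\mathrm{op}})$; since $T$ enters the count only through $\log|T|=O\!\big(k\log\tfrac{Lr}{\rho}\big)$, this costs at most an additive $O(k\log n)$ in $m$, absorbed into $Ck\log\tfrac{Lr}{\delta}$ in the regime of interest (a dyadic chaining over residual scales, in which each successive net is only $5^{O(k)}$ times larger than the previous one, handles this term more carefully and is the crux of the argument in \cite{Bor17}). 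Intersecting the two high-probability events --- operator norm controlled, two-sided bound on the net --- then yields the S-REC, and composing with the deterministic argument above completes the proof.
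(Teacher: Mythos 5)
Your proposal reconstructs exactly the argument that the paper adopts by citation from \cite{Bor17}: the paper does not reprove Theorem~\ref{thm:bora1}, but the proof of Corollary~\ref{cor:bora} names precisely the two events you isolate (the S-REC on ${\rm Range}(G)$ and the one-sided norm bound on the single $\bA$-independent vector $\bx^*-\bar\bx$), so you are on the intended route. Your deterministic chaining is correct, and the coefficients $\big(1+\tfrac{9}{4\gamma},\tfrac{2}{\gamma},\tfrac{1}{\gamma},\tfrac{1}{\gamma}\big)$ at $\gamma=\tfrac34$ are indeed dominated by the stated $(6,3,2,2)$.

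The one genuine gap is in the probabilistic step, and you have half-flagged it yourself: the ``clean remedy'' of choosing $\rho$ of order $\delta/(L\|\bA\|_{\mathrm{op}})$ with $\|\bA\|_{\mathrm{op}}=O(\sqrt{n})$ forces $\log|T| = O\big(k\log\tfrac{Lr\sqrt{n}}{\delta}\big)$, i.e.\ it proves the theorem only with $m \ge Ck\log\tfrac{Lr\sqrt{n}}{\delta}$. The extra $k\log n$ is \emph{not} absorbable into $Ck\log\tfrac{Lr}{\delta}$ in general (take $L$, $r$, $\delta$ constant and $n$ superpolynomial in $k$), and the theorem as stated has no $n$-dependence in the measurement count. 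Removing it is exactly the content of the chaining lemma in \cite{Bor17}: one builds a nested sequence of nets at geometrically decreasing scales, union-bounds the Gaussian concentration over pairs of successive net points at each scale (where the per-scale net-size growth is only $e^{O(k)}$, so the union bound stays affordable), and controls the final sub-$\delta/L$ residual by an expectation-plus-Markov argument on $\|\bA\bv\|_2$ rather than by a worst-case operator-norm bound. Your parenthetical acknowledges this is ``the crux,'' but the proposal does not carry it out, so as written it establishes a strictly weaker sample complexity than the one claimed. Everything else checks out.
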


The sample complexity $m=O(k\log \frac{Lr}{\delta})$ comes from the covering number of $B_2^k(r)$ with parameter $\frac{\delta}{L}$ (see Appendix \ref{sec:covering} for formal definitions), which is $\exp\big( O(k\log \frac{Lr}{\delta}) \big)$.  The analysis of \cite{Bor17} extends directly to other compact domains $\calD\subseteq\bbR^k$; in the following, we provide the extension to the rectangular domain $\calD = B_{\infty}^k(r)$, which will be particularly relevant in this paper.
The covering number of  $B_{\infty}^k(r)$ with parameter $\frac{\delta}{L}$ is $\exp\big( O(k\log \frac{Lr\sqrt{k}}{\delta} ) \big)$ (see Appendix \ref{sec:covering}), and as a result, the sample complexity becomes $m=O\big(k\log \frac{Lr\sqrt{k}}{\delta}\big)$, and we have the following.


\begin{theorem} {\em (Upper Bound for Rectangular Domains; Adapted from \cite[Thm.~1.2]{Bor17})} \label{thm:bora2}
    Fix $r > 0$, let $G: B_{\infty}^k(r) \rightarrow \bbR^n$ be an $L$-Lipschitz function, and let $\bA \in \bbR^{m \times n}$ be a random measurement matrix whose entries are i.i.d.~with distribution $A_{ij} \sim \calN(0,\frac{1}{m})$.  Given the observed vector $\by = \bA\bx^* + \bmeta$ for some $\bx^* \in \bbR^n$, let $\hat{\bz}$ minimize $\|\by - \bA G(\bz)\|_2$ to within additive error $\epsilon$ of the optimum over $\bz \in B_{\infty}^k(r)$. Then, with a number of measurements satisfying $m \ge C k\log \frac{Lr \sqrt{k}}{\delta}$ for a universal constant $C$  and any $\delta > 0$, the following holds with probability $1-e^{-\Omega(m)}$:
    \begin{equation}\label{eq:upper_bds2}
    \|G(\hat{\bz})-\bx^*\|_2 \le 6 \min_{\bz^* \in B_{\infty}^k(r)} \|G(\bz^*)-\bx^*\|_2 + 3\|\bmeta\|_2 + 2\epsilon + 2\delta.
    \end{equation}
\end{theorem}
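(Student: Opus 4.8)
The plan is to follow the proof of Theorem~\ref{thm:bora1} from \cite{Bor17} essentially verbatim, the only change being the covering-number bound for the input domain, which for the rectangular domain $B_{\infty}^k(r)$ picks up an extra $\sqrt{k}$ factor inside the logarithm. Recall that the argument in \cite{Bor17} decomposes into (i) a deterministic decoding guarantee: if $\bA$ satisfies a Set-Restricted Eigenvalue Condition (S-REC) over $\mathrm{Range}(G)$ with a multiplicative constant $\gamma$ bounded away from $0$ and additive slack $O(\delta)$, and additionally $\|\bA\bv\|_2\le 2\|\bv\|_2$ for the finitely many fixed vectors that arise in the analysis, then any $\epsilon$-approximate minimizer $\hat\bz$ of $\|\by-\bA G(\bz)\|_2$ obeys the claimed error bound; and (ii) a probabilistic guarantee that an i.i.d.\ Gaussian $\bA$ with $m=\Omega\!\big(k\log\tfrac{Lr\sqrt k}{\delta}\big)$ rows satisfies both requirements with probability $1-e^{-\Omega(m)}$. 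Step (i) is purely geometric and is unchanged from the spherical case, so all of the adaptation happens in step (ii), and there only in one place.

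For step (ii), first I would let $M\subseteq B_{\infty}^k(r)$ be a minimal Euclidean $\tfrac{\delta}{L}$-net of $B_{\infty}^k(r)$. Since $B_{\infty}^k(r)\subseteq B_2^k(r\sqrt k)$, a standard volumetric bound (Appendix~\ref{sec:covering}) gives $|M|\le \big(1+\tfrac{2Lr\sqrt k}{\delta}\big)^k = \exp\!\big(O(k\log\tfrac{Lr\sqrt k}{\delta})\big)$, which is exactly the covering-number estimate quoted before the theorem; this is the sole point at which the rectangular domain differs from the spherical one. Because $G$ is $L$-Lipschitz, $G(M)$ is a Euclidean $\delta$-net of $\mathrm{Range}(G)$. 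Standard Johnson--Lindenstrauss concentration shows that for each fixed $\bv$, $\|\bA\bv\|_2$ lies in $[\gamma\|\bv\|_2,\,2\|\bv\|_2]$ except with probability $e^{-\Omega(m)}$; a union bound over the at most $|M|^2$ pairwise differences of points of $G(M)$, valid once $m\ge Ck\log\tfrac{Lr\sqrt k}{\delta}$ for a suitable constant $C$, then yields the S-REC on the net $G(M)$, and the usual extension from the net to all of $\mathrm{Range}(G)$ — controlling $\|\bA(G(\bz)-G(\bz'))\|_2$ for $\bz'$ the nearest net point to $\bz$, exactly as in \cite{Bor17} — promotes this to the S-REC on $\mathrm{Range}(G)$ with additive slack $O(\delta)$. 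The upper-norm bound $\|\bA\bv\|_2\le 2\|\bv\|_2$ for the fixed vectors $\bx^*-G(\bz^*)$ and their counterparts follows from the same concentration estimate and an additional union bound absorbed into $e^{-\Omega(m)}$.

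Finally I would run the deterministic decoding step of \cite{Bor17}. Let $\bz^*$ (nearly) minimize $\|G(\bz)-\bx^*\|_2$ over $B_{\infty}^k(r)$. Writing $\by-\bA G(\bz)=\bA(\bx^*-G(\bz))+\bmeta$ and using that $\hat\bz$ is within $\epsilon$ of optimal, the triangle inequality together with $\|\bA(\bx^*-G(\bz^*))\|_2\le 2\|\bx^*-G(\bz^*)\|_2$ bounds $\|\bA(G(\hat\bz)-G(\bz^*))\|_2$ in terms of $\|G(\bz^*)-\bx^*\|_2$, $\|\bmeta\|_2$, $\epsilon$, and $\delta$; the S-REC converts this into a bound on $\|G(\hat\bz)-G(\bz^*)\|_2$, and one last triangle inequality against $\|G(\bz^*)-\bx^*\|_2$ yields \eqref{eq:upper_bds2} with the constants $6,3,2,2$ after the same bookkeeping as in \cite{Bor17}. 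The main (and rather mild) obstacle is simply to confirm that the extra $\sqrt k$ enters only through $|M|$ and nowhere else, so that the rest of the proof — in particular the $1-e^{-\Omega(m)}$ probability and the multiplicative/additive constants — goes through exactly as in the spherical case, with sample complexity $m=O\!\big(k\log\tfrac{Lr\sqrt k}{\delta}\big)$.
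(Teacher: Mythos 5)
Your proposal is correct and follows exactly the route the paper takes: the paper's justification for Theorem~\ref{thm:bora2} is precisely that the proof of \cite[Thm.~1.2]{Bor17} depends on the domain only through the covering number of $\calD$ at scale $\frac{\delta}{L}$, and for $B_{\infty}^k(r)$ the volumetric bound of Lemma~\ref{lem:covering_number} gives $\exp\big(O(k\log\frac{Lr\sqrt{k}}{\delta})\big)$, whence the modified sample complexity. Your additional unpacking of the S-REC/net/union-bound structure and the deterministic decoding step is accurate but is simply a restatement of the argument in \cite{Bor17} that the paper invokes wholesale.
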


Another main result of \cite{Bor17} (namely, Theorem 1.1 therein) concerns the sample complexity for generative models formed by neural networks with ReLU activations.  We also establish corresponding lower bounds for such settings, but the formal statements are deferred to Section \ref{sec:relu}.

For the sake of comparison with our lower bounds, it will be useful to manipulate Theorems \ref{thm:bora1} and \ref{thm:bora2} into a different form.  To do this, we specialize the setting as follows:
\begin{itemize}
    \item We assume that $\bx^* \in {\rm Range}(G)$, so that the first terms in \eqref{eq:upper_bds} and \eqref{eq:upper_bds2} become zero.  That is, representation error does not play a role in this paper.
    \item We consider the case that the constrained minimum of $\|\by - \bA G(\bz)\|_2$ can be found exactly, so that $\epsilon = 0$.  That is, optimization error also does not play a role.
    \item We consider the case of i.i.d.~Gaussian noise: $\bmeta\sim \calN(0,\frac{\alpha}{m} \bI_m)$, where $\bI_m$ is the $m \times m$ identity matrix, and $\alpha > 0$ is a positive constant indicating the noise level.  By standard $\chi^2$-concentration \cite[Sec.~2.1]{wainwright2019high}, we have $\|\bmeta\|_2^2 \le 2\alpha$ with probability $1 - e^{-\Omega(m)}$, and when this holds, the term $3\|\bmeta\|_2$ is upper bounded by $3\sqrt{2\alpha}$.
    \item We focus on the case that the goal is to bring the error bound down to the noise level, and hence, we set $\delta = \sqrt{\alpha}$ to match the $\|\bmeta\|_2$ term (to within a constant factor) with high probability.
\end{itemize}
In addition, we move from a randomized measurement matrix $\bA$ to a fixed deterministic choice, as our lower bounds will be stated in the latter form.  In this specialized setting, we have the following.

\begin{corollary} \label{cor:bora}
    {\em (Simplified Upper Bounds)}
    Consider the setup of Theorem \ref{thm:bora1} with fixed $\bA$ in place of Gaussian measurements, and with $\bx^* \in {\rm Range}(G)$, no optimization error ($\epsilon = 0$), i.i.d.~Gaussian noise  $\bmeta\sim \calN(0,\frac{\alpha}{m} \bI_m)$ with $\alpha > 0$, and $\delta = \sqrt{\alpha}$.  Then, when $m \ge C k\log \frac{Lr}{\sqrt{\alpha}}$ for a universal constant $C$, there exists some $\bA \in \bbR^{m \times n}$ with $\|\bA\|_{\rmF}^2 = n$ such that
    \begin{equation}
        \sup_{ \bx^* \in {\rm Range}(G) } \bbE\big[ \|G(\hat{\bz}) - \bx^*\|_2^2 \big] \le C' \alpha \label{eq:cor_guarantee}
    \end{equation}
    for a universal constant $C'$.  Analogously, we have the same result under the setup of Theorem \ref{thm:bora2} (i.e., $\calD = B_{\infty}^k(r)$) when $m \ge C k\log \frac{Lr \sqrt{k}}{\sqrt{\alpha}}$.
\end{corollary}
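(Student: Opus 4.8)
The plan is to obtain the corollary as a direct specialization of Theorems~\ref{thm:bora1} and~\ref{thm:bora2}, combined with a standard conversion from a high-probability guarantee to a bound in expectation. I start with the spherical case. Substituting $\bx^* \in {\rm Range}(G)$, $\epsilon = 0$, and $\delta = \sqrt{\alpha}$ into \eqref{eq:upper_bds} collapses its right-hand side to $3\|\bmeta\|_2 + 2\sqrt{\alpha}$. The event on which \eqref{eq:upper_bds} holds is a property of $\bA$ alone (the S-REC of \cite{Bor17}, together with the accompanying noise-amplification bound), and once it occurs the inequality is valid simultaneously for \emph{every} $\bx^*$ and \emph{every} $\bmeta$; it has probability $1 - e^{-\Omega(m)}$. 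Intersecting it with the concentration event $\|\bA\|_{\rmF}^2 \in [\tfrac{n}{2}, 2n]$ (also of probability $1 - e^{-\Omega(m)}$, since $\bbE\|\bA\|_{\rmF}^2 = n$) leaves, for $m$ above an absolute constant, a nonempty event; choosing a realization in it and rescaling by $\sqrt{n/\|\bA\|_{\rmF}^2} \in [\tfrac{1}{\sqrt{2}},\sqrt{2}\,]$ produces a fixed $\bA$ with $\|\bA\|_{\rmF}^2 = n$. Since $\hat{\bz}$ is unchanged when $(\by,\bA)$ are multiplied by a common positive scalar, this rescaling alters the effective noise variance only by a factor in $[\tfrac12,2]$, which I absorb into $C'$; so from now on I may assume $\|\bA\|_{\rmF}^2 = n$ and that $\|G(\hat{\bz}) - \bx^*\|_2 \le 3\|\bmeta\|_2 + 2\sqrt{\alpha}$ holds deterministically in $(\bx^*,\bmeta)$.

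With $\bA$ fixed, the remaining randomness is $\bmeta \sim \calN(0,\tfrac{\alpha}{m}\bI_m)$. The $\chi^2$-concentration quoted in the excerpt gives $\|\bmeta\|_2^2 \le 2\alpha$ on an event $\calE$ with $\bbP(\calE^c) \le e^{-\Omega(m)}$, so on $\calE$ we have $\|G(\hat{\bz}) - \bx^*\|_2 \le (3\sqrt{2}+2)\sqrt{\alpha}$ uniformly over $\bx^* \in {\rm Range}(G)$. I then split the expectation as
\begin{equation}
    \bbE\big[\|G(\hat{\bz}) - \bx^*\|_2^2\big] \le (3\sqrt{2}+2)^2\alpha + \bbE\big[\|G(\hat{\bz}) - \bx^*\|_2^2 \, \mathbf{1}_{\calE^c}\big],
\end{equation}
and bound the last term using that ${\rm Range}(G)$ has bounded diameter: $L$-Lipschitzness gives $\|G(\bz_1) - G(\bz_2)\|_2 \le L\|\bz_1 - \bz_2\|_2 \le 2Lr$ for all $\bz_1, \bz_2 \in B_2^k(r)$, so that term is at most $4L^2r^2\, e^{-\Omega(m)}$. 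It then remains to verify that the hypothesis $m \ge C k\log\tfrac{Lr}{\sqrt{\alpha}}$ forces $4L^2r^2\, e^{-\Omega(m)} \le \alpha$: writing the failure probability as $e^{-cm}$ for a constant $c>0$, this requires $cm \ge \log\tfrac{4L^2r^2}{\alpha}$, which follows from $m \ge Ck\log\tfrac{Lr}{\sqrt{\alpha}} \ge C\log\tfrac{Lr}{\sqrt{\alpha}}$ once $C$ is chosen large enough relative to $1/c$ (using $k \ge 1$ and the usual convention that $Lr/\sqrt{\alpha}$ exceeds a fixed constant, which also places $m$ above the absolute threshold needed above). This yields \eqref{eq:cor_guarantee}, e.g.\ with $C' = (3\sqrt{2}+2)^2 + 1$.

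For the rectangular domain I would rerun the identical argument starting from Theorem~\ref{thm:bora2}; the only modifications are the diameter bound $\|G(\bz_1) - G(\bz_2)\|_2 \le 2Lr\sqrt{k}$ valid for $\bz_1,\bz_2 \in B_\infty^k(r)$, and the matching hypothesis $m \ge C k\log\tfrac{Lr\sqrt{k}}{\sqrt{\alpha}}$, which in turn dominates $\log\tfrac{4L^2r^2 k}{\alpha}$. The only step needing genuine care is this final bookkeeping — confirming that the exponentially small failure probability, once multiplied by the possibly large squared diameter $4L^2r^2$ (respectively $4L^2r^2 k$), is still $O(\alpha)$ under the stated sample complexity; everything else is a routine substitution into the cited theorems.
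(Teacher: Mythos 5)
Your proposal is correct and follows the same skeleton as the paper's proof (specialize the theorem, extract a fixed $\bA$ from the high-probability events, normalize its Frobenius norm at the cost of a constant, then pass to an expectation over the noise), but the last step is handled differently. The paper observes that once the bound $\|G(\hat{\bz})-\bx^*\|_2 \le 3\|\bmeta\|_2 + 2\delta$ holds deterministically for every noise realization, one can simply square it and take the expectation, using $\bbE\|\bmeta\|_2^2 = \alpha$ (and $\delta^2=\alpha$) to get $\bbE[\|G(\hat{\bz})-\bx^*\|_2^2] \le \bbE[(3\|\bmeta\|_2+2\sqrt{\alpha})^2] \le 26\alpha$ directly; no tail event for $\bmeta$ is needed at all. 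Your conditioning on the $\chi^2$ event $\calE$ and paying for $\calE^c$ with the diameter of ${\rm Range}(G)$ is valid but redundant, and it is what forces you to invoke the extra (mild, but not stated in the corollary) assumption that $Lr/\sqrt{\alpha}$ exceeds a fixed constant so that $L^2r^2 e^{-\Omega(m)} \le \alpha$; the paper's direct squaring avoids this entirely. One further small imprecision: you describe the event on which \eqref{eq:upper_bds} holds as ``a property of $\bA$ alone (the S-REC \ldots together with the accompanying noise-amplification bound),'' but the noise-amplification event $\|\bA(\bx^*-\bar{\bx})\|_2 \le 2\|\bx^*-\bar{\bx}\|_2$ depends on $\bx^*$ and is \emph{not} uniform over it in general; the paper is careful to point out that this event can be discarded precisely because $\bx^* \in {\rm Range}(G)$ makes it vacuous, which is the real reason the remaining event (the S-REC) is a property of $\bA$ alone and uniform in $(\bx^*,\bmeta)$. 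With that clarification and with the diameter detour replaced by (or justified alongside) the direct squaring argument, your proof matches the paper's.
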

\begin{proof}
    We first prove the case of a spherical domain based on Theorem \ref{thm:bora1} and its proof in \cite{Bor17}.  An inspection of that proof reveals that there are two high-probability events leading to the result holding with probability $1-e^{-\Omega(m)}$:
    \begin{itemize}
        \item $\bA$ satisfies the {\em set-restricted eigenvalue condition} (S-REC) with parameters $(\gamma,\delta)$ and set $S = {\rm Range}(G)$.  This condition states that $\|\bA(\bx_1 - \bx_2)\|_2 \ge \gamma \|\bx_1 - \bx_2\|_2 - \delta$, and in \cite{Bor17} the choice $\gamma = \frac{1}{2}$ is made.
        \item $\bA$ satisfies $\|\bA(\bx^* - \bar{\bx})\|_2 \le 2\|\bx^* - \bar{\bx}\|_2$, where $\bar{\bx} = \argmin_{\bx \in {\rm Range}(G)} \|\bx^* - \bx\|$.
    \end{itemize}
    The second of these events can be ignored in the present simplified setting, as it trivially holds due to the assumption $\bx^* \in {\rm Range}(G)$.  Hence, we only require the S-REC to ensure the following simplified form of \eqref{eq:upper_bds}:
    \begin{equation}\label{eq:upper_bds3}
        \|G(\hat{\bz})-\bx^*\|_2 \le 3\|\bmeta\|_2 + 2\delta.
    \end{equation}
    Since this holds with high probability (and uniformly in $\bx^*$ and $\bmeta$) for random $\bA$, it also holds for the best possible fixed $\bA$.  In addition, by standard $\chi^2$ concentration \cite[Sec.~2.1]{wainwright2019high}, we may assume that this fixed $\bA$ satisfies $\|\bA\|_\rmF^2 = n(1+o(1))$.  Then, squaring both sides of \eqref{eq:upper_bds3} and averaging over the Gaussian noise $\bmeta\sim \calN(0,\frac{\alpha}{m} \bI_m)$, we obtain a guarantee of the form \eqref{eq:cor_guarantee}.

    The preceding argument gives $\|\bA\|_\rmF^2 = n(1+o(1))$, but we can also normalize the measurement matrix to ensure $\|\bA\|_\rmF^2 = n$, and we see from the above definition of the S-REC that this only amounts to multiplying the parameters $(\gamma,\delta)$ by $1+o(1)$, which only amounts to changing the constant $C'$ in \eqref{eq:cor_guarantee}.  In addition, for rectangular domains, ($\bz \in B_{\infty}^k(r)$), the analogous claim us proved via Theorem \ref{thm:bora2} in an identical manner.

    %

\end{proof}


\section{Lower Bound for Bounded Lipschitz-Continuous Models} \label{sec:lower}

In this section, we construct a Lipschitz-continuous generative model that can generate bounded $k$-group-sparse vectors. Then, by making use of minimax statistical analysis for group-sparse recovery, we provide information-theoretic lower bounds that match the upper bounds in Corollary \ref{cor:bora}.

\subsection{Choice of Generative Model for the Rectangular Domain $B_{\infty}^k(r)$}
\label{sec:gen_model}

We would like to construct an $L$-Lipschitz function $G: B_{\infty}^k(r) \rightarrow \bbR^n$ such that recovering an arbitrary $\bx^*$ in its range with high probability and with $O(\alpha)$ squared error requires $m = \Omega\big(k\log \frac{Lr\sqrt{k}}{\sqrt{\alpha}}\big)$.  Recall that we consider $\by = \bA \bx^* + \bmeta$ with $A_{ij} \sim \calN(0,\frac{1}{m})$ and $\bmeta\sim \calN(0,\frac{\alpha}{m} \bI_m)$. 

\begin{figure}
    \begin{center}
        \includegraphics[width=0.45\textwidth]{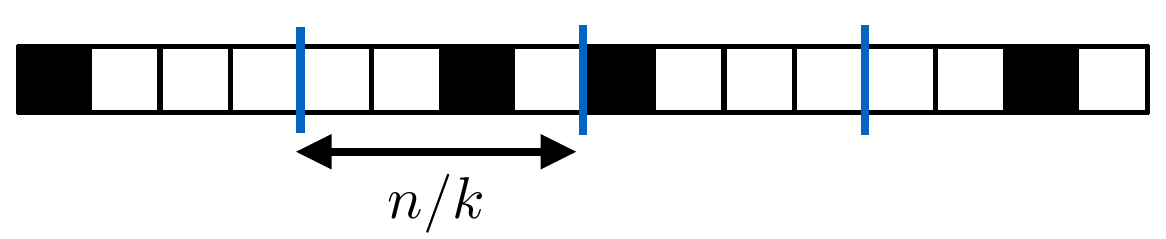}
    \end{center}
    \caption{Example of a $k$-group-sparse signal with length $n=16$ and sparsity $k=4$.  Each segment of size $\frac{n}{k} = 4$ has one non-zero entry, indicated by being filled instead of blank.} \label{fig:group} 
\end{figure}

Our approach is to construct such a generative model that is able to generate group-sparse signals, and then follow the steps of the minimax converse for sparse estimation \cite{Can13,Duc13}. More precisely, we say that a signal in $\bbR^n$ is {\em $k$-group-sparse} if, when divided into $k$ blocks of size $\frac{n}{k}$,\footnote{To simplify the notation, we assume that $n$ is an integer multiple of $k$.  For general values of $n$, the same analysis goes through by letting the final $n - k\lfloor\frac{n}{k}\rfloor$ entries of $\bx$ always equal zero.} each block contains at most one non-zero entry.\footnote{More general notions of group sparsity exist, but for compactness we simply refer to this specific notion as $k$-group-sparse.}  See Figure \ref{fig:group} for an illustration. We define
\begin{equation}
    \calS_k = \big\{ \bx \in \bbR^n: \bx \text{ is } k \text{-group-sparse} \big\},
\end{equation}
and the following constrained variants: 
\begin{gather}
     \calS_k(x_{\max}) = \big\{ \bx \in \calS_k \,:\, \|\bx\|_\infty \le x_{\max}  \big\},  \label{eq:S_group} \\
     \bar{\calS}_k(\xi) = \big\{ \bx \in \calS_k \,:\, \|\bx\|_0 = k,  \|\bx\|_{\infty} = \xi  \big\}.  \label{eq:S_group_eq}
\end{gather}
The vectors in $\bar{\calS}_k(\xi)$ have exactly $k$ non-zero entries all having magnitude $\xi$.  These vectors alone will suffice for establishing our lower bound (with a suitable choice of $\xi$), but we construct a generative model capable of producing all signals in $\calS_k(x_{\max})$; this is done as follows:
\begin{itemize}
    \item The output vector $\bx \in \bbR^n$ is divided into $k$ blocks of length $\frac{n}{k}$, denoted by $\bx^{(1)},\dotsc,\bx^{(k)} \in \bbR^{\frac{n}{k}}$.
    \item A given block $\bx^{(i)}$ is only a function of the corresponding input $z_i$, for $i=1,\dotsc,k$.
    \item The mapping from $z_i$ to $\bx^{(i)}$ is as shown in Figure \ref{fig:toy_gen}.  The interval $[-r,r]$ is divided into $\frac{n}{k}$ intervals of length $\frac{2rk}{n}$, and the $j$-th entry of $\bx^{(i)}$ can only be non-zero if $z_i$ takes a value in the $j$-th interval.  Within that interval, the mapping takes a ``double-triangular'' shape -- the endpoints and midpoint are mapped to zero, the points $\frac{1}{4}$ and $\frac{3}{4}$ of the way into the interval are mapped to $x_{\max}$ and $-x_{\max}$ respectively, and the remaining points follow a linear interpolation.  As a result, all values in the range $[-x_{\max},x_{\max}]$ can be produced.
\end{itemize}
\begin{figure}
    \begin{center}
        \includegraphics[width=1.0\columnwidth]{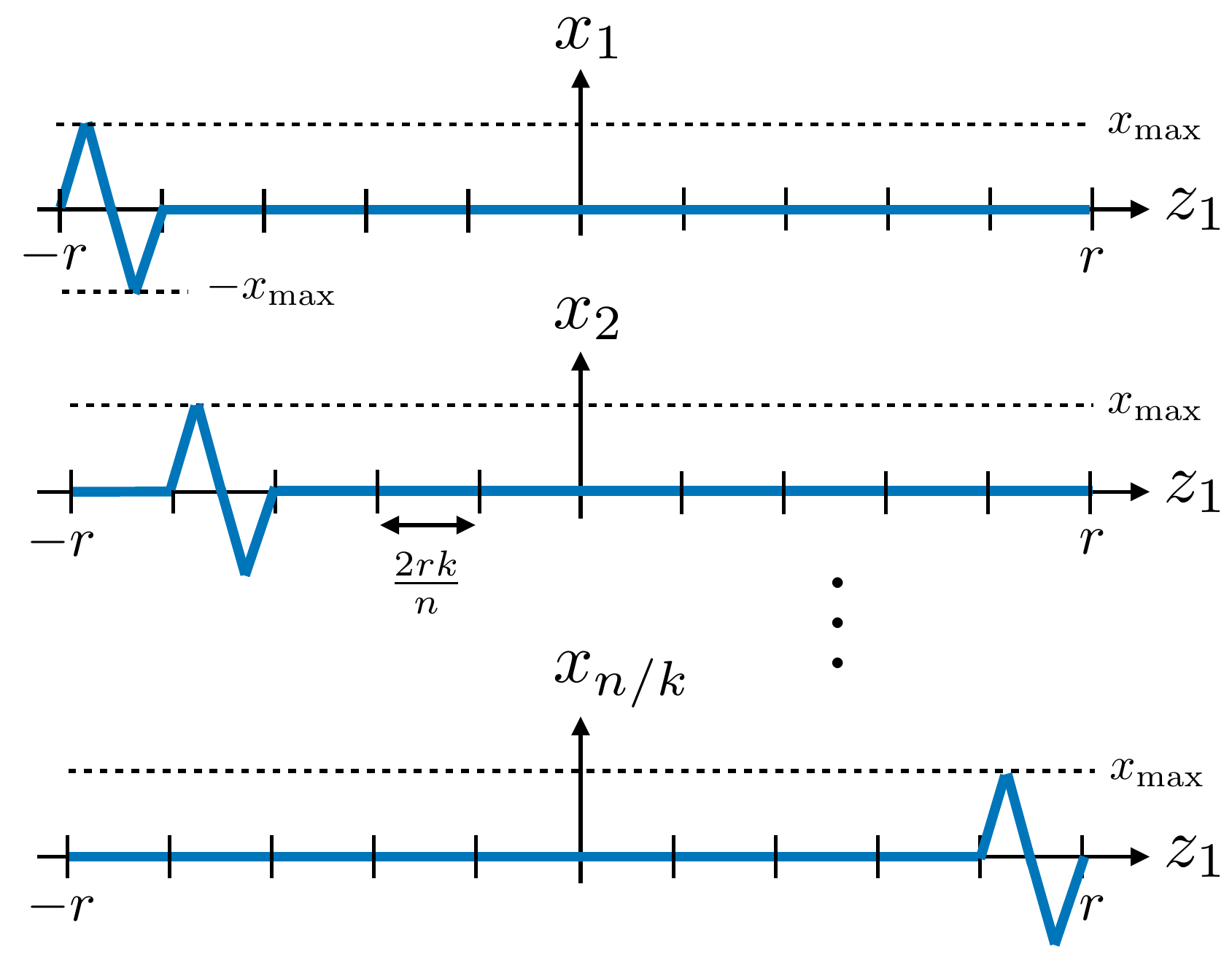}
    \end{center}
    \caption{Generative model that produces sparse signals.  This figure shows the mapping from $z_1 \to (x_1,\dotsc,x_{\frac{n}{k}})$, and the same relation holds for $z_2 \to (x_{\frac{n}{k}+1},\dotsc,x_{\frac{2n}{k}})$ and so on up to $z_k \to (x_{n-k+1},\dotsc,x_{n})$.} \label{fig:toy_gen} 
\end{figure}

While this generative model is considerably simpler than those used to generate complex synthetic data (e.g., natural images), it suffices for our purposes because it satisfies the assumptions imposed in \cite{Bor17}.  Our main goal is to show that the scaling laws of the upper bounds in \cite{Bor17} cannot be improved without further assumptions.

The simplicity of the preceding generative model permits a direct calculation of the Lipschitz constant, stated as follows.

\begin{lemma} \label{lem:Lipschitz}
    {\em (Calculation of Lipschitz Constant)}
    The generative model $G \,:\, B_{\infty}^k(r) \to \calS_k(x_{\max})$ described above, with parameters $n$, $k$, $r$, and $x_{\max}$, has a Lipschitz constant of
    \begin{equation}
        L = \frac{2nx_{\max}}{kr}.
    \end{equation}
\end{lemma}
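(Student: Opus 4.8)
The plan is to compute the Lipschitz constant directly from the block structure of $G$. Since each output block $\bx^{(i)}$ depends only on the single coordinate $z_i$, and distinct blocks occupy disjoint sets of output coordinates, for any $\bz, \bz' \in B_\infty^k(r)$ we have
\begin{equation}
    \|G(\bz) - G(\bz')\|_2^2 = \sum_{i=1}^k \|\bx^{(i)}(z_i) - \bx^{(i)}(z_i')\|_2^2,
\end{equation}
where $\bx^{(i)}(\cdot)$ denotes the map from the $i$-th latent coordinate to the $i$-th block. Hence it suffices to show that each scalar-to-block map $z_i \mapsto \bx^{(i)}(z_i)$ is $\ell$-Lipschitz from $([-r,r], |\cdot|)$ to $(\bbR^{n/k}, \|\cdot\|_2)$ with $\ell = \frac{2nx_{\max}}{kr}$; then summing gives $\|G(\bz)-G(\bz')\|_2^2 \le \ell^2 \sum_i |z_i - z_i'|^2 = \ell^2 \|\bz - \bz'\|_2^2$, which is exactly the claimed bound $L = \ell$.

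First I would analyze a single sub-interval. Each of the $\frac{n}{k}$ sub-intervals has length $\frac{2rk}{n}$, and within a given sub-interval only one output coordinate of the block is (possibly) non-zero, with the ``double-triangular'' profile: it rises linearly from $0$ to $x_{\max}$ over the first quarter of the interval (a horizontal run of $\frac14 \cdot \frac{2rk}{n} = \frac{rk}{2n}$), comes back down to $0$ at the midpoint, down to $-x_{\max}$ at the three-quarter point, and back to $0$ at the right endpoint. Every linear piece therefore has slope of absolute value $\frac{x_{\max}}{rk/(2n)} = \frac{2nx_{\max}}{kr}$. So within a single sub-interval the map $z_i \mapsto \bx^{(i)}(z_i)$ is a scalar piecewise-linear function (only one coordinate active) with Lipschitz constant exactly $\frac{2nx_{\max}}{kr}$.

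Next I would stitch the sub-intervals together. The subtle point is that across the sub-interval boundaries the ``active coordinate'' changes, so $\bx^{(i)}$ is not literally a scalar function globally — but because the profile vanishes at every sub-interval endpoint, the whole block map $z_i \mapsto \bx^{(i)}(z_i)$ is continuous, and on each sub-interval only the coordinate associated to that sub-interval moves while all others stay at $0$. For $z_i, z_i'$ in the same sub-interval the bound is immediate from the previous paragraph. For $z_i, z_i'$ in different sub-intervals, I would use a standard telescoping/path argument: insert the intermediate sub-interval endpoints $t_0 = z_i < t_1 < \dots < t_m = z_i'$ (each $t_j$ a breakpoint), note $\bx^{(i)}(t_j)$ and $\bx^{(i)}(t_{j+1})$ differ only in one coordinate (or one is an endpoint where the block is $\bzero$), apply the per-piece slope bound on each sub-segment, and sum; since the vectors $\bx^{(i)}(t_{j+1}) - \bx^{(i)}(t_j)$ are supported on disjoint coordinates, the $\ell_2$ norms combine via the Pythagorean identity rather than the triangle inequality, giving $\|\bx^{(i)}(z_i) - \bx^{(i)}(z_i')\|_2^2 \le \left(\frac{2nx_{\max}}{kr}\right)^2 |z_i - z_i'|^2$. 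Finally, to see the constant cannot be improved, I would exhibit a single pair of points within one quarter-sub-interval where equality holds.

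The routine per-piece slope computation is trivial; the main obstacle — really just a bookkeeping nuisance — is the disjoint-support observation that lets the cross-interval increments be combined by Pythagoras instead of the triangle inequality, which is what prevents the bound from degrading by a factor depending on the number of sub-intervals crossed. I expect this to be clean once that observation is made explicit.
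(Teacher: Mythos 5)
Your proposal is correct and follows essentially the same route as the paper: decompose the squared $\ell_2$ distance across the $k$ disjoint output blocks, bound each scalar-to-block map by the per-piece slope $\frac{2nx_{\max}}{kr}$, and combine via the Pythagorean identity; the paper simply asserts the within-block step (``the ratio is maximized when $z_1$ and $\bar z_1$ lie in the same small interval'') where you spell out the telescoping argument. One small simplification: the disjoint-support/Pythagoras observation for the cross-interval increments is not actually needed, since the plain triangle inequality already gives $\|\bx^{(i)}(z_i)-\bx^{(i)}(z_i')\|_2 \le L_0 \sum_j |t_{j+1}-t_j| = L_0|z_i - z_i'|$ with no degradation in the constant.
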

\begin{proof}
    Recall that $\bx^{(i)}$ is the length-$\frac{n}{k}$ block corresponding to $z_i$, and for concreteness consider $i=1$. For two distinct $\bx^{(1)}, \bar{\bx}^{(1)} \in \bbR^{\frac{n}{k}}$, it is easy to see that the ratio $\frac{\|\bx^{(1)}- \bar{\bx}^{(1)}\|_2}{|z_1 -\bar{z}_1|}$ is maximized when $z_1$ and $\bar{z}_1$ are in the same small interval with length $\frac{2kr}{n}$. In particular, the Lipschitz constant for the sub-block is the absolute value of the slope of a line segment in that interval, denoted by $L_0 = \frac{2nx_{\max}}{kr}$. Then, combining the $k$ sub-blocks, we have
    \begin{align}
        \|\bx - \bar{\bx}\|_2^2 
        &= \sum_{i=1}^k \|\bx^{(i)} - \bar{\bx}^{(i)}\|_2^2 \\
        &\le L_0^2 \sum_{i=1}^k (z_1 - \bar{z}_1)^2 \\
        &=  L_0^2  \| \bz - \bar{\bz} \|_2^2,
    \end{align}
    so the overall Lipschitz constant is also $L = L_0 = \frac{2nx_{\max}}{kr}$. 
\end{proof}


\subsection{Minimax Lower Bound for Group-Sparse Recovery}

Consider the problem of estimating a $k$-group-sparse signal $\bx^* \in \bar{\calS}_k(\xi)$ (see \eqref{eq:S_group_eq}) from linear measurements $\by = \bA\bx^* + \bmeta$, where $\bmeta\sim \calN(0,\sigma^2 \bI_m)$ (we will later substitute $\sigma^2 = \frac{\alpha}{m}$).  
Specifically, given $\bA$ and $\by$, an estimate $\hat{\bx}$ is formed.  We are interested in establishing a lower bound on the minimax risk $\inf_{\hat{\bx}} \sup_{\bx^* \in \bar{\calS}_k(\xi)} \bbE_{\bx^*} \left[\|\bx^*-\hat{\bx}\|_2^2\right]$, where $\bbE_{\bx^*}$ denotes expectation when the underlying vector is $\bx^*$, and $\hat{\bx}$ is understood to be a function of $(\bA,\by)$, with $\by$ depending on $\bx^*$.


The following lemma states a minimax lower bound for $k$-group-sparse recovery under a suitable choice of $\xi$.  This result is proved using similar ideas to the case of regular $k$-sparse recovery \cite{Can13,Duc13}, with suitable modifications.

\begin{lemma}\label{lem:group}
    {\em (Lower Bound for Group-Sparse Recovery)}
    Consider the problem of $k$-group-sparse recovery with parameters $n$, $m$, $k$, and $\sigma^2$, with a given measurement matrix $\bA \in \bbR^{m \times n}$.  If $n \ge C_0 k$ for an absolute constant $C_0$, and if $\xi = \sqrt{\frac{n \sigma^2 \log\frac{n}{k}}{4\|\bA\|_\rmF^2}}$, then we have 
    \begin{equation}
        \inf_{\hat{\bx}} \sup_{\bx^* \in \bar{\calS}_k(\xi)} \bbE_{\bx^*} \left[\|\bx^*-\hat{\bx}\|_2^2\right] \ge \frac{n \sigma^2 k \log \frac{n}{k}}{64\|\bA\|_\rmF^2}. \label{eq:minimax_sigma}
    \end{equation}
    In particular, to achieve $\inf_{\hat{\bx}} \sup_{\bx^* \in \bar{\calS}_k(\xi)} \bbE_{\bx^*} \left[\|\bx^*-\hat{\bx}\|_2^2\right] \le C_1 m\sigma^2$ for a positive constant $C_1 > 0$, we require 
    \begin{equation}\label{eq:sample_complexity1}
        m \ge \frac{n k \log \frac{n}{k}}{ 64 C_1 \|\bA\|_\rmF^2}.
    \end{equation}
\end{lemma}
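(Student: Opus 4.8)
The plan is to reduce the estimation problem to $k$ independent within-block testing problems and apply Fano's inequality to each, following the spirit of the minimax converse for ordinary sparse recovery. First I would restrict attention to the sub-family $\{\bx_{\btheta} : \btheta \in \{1,\dots,n/k\}^k\} \subseteq \bar{\calS}_k(\xi)$, where $\bx_{\btheta}$ is the $k$-group-sparse vector whose $b$-th length-$(n/k)$ block equals $\xi\be_{\theta_b}$, i.e.\ a single spike of height $\xi$ at position $\theta_b$ within block $b$. Since $\sup_{\bx^* \in \bar{\calS}_k(\xi)}$ dominates the Bayes risk under any prior supported on this sub-family, it suffices to lower bound the Bayes risk when $\btheta$ is uniform, so that $\theta_1,\dots,\theta_k$ are i.i.d.\ uniform on $\{1,\dots,n/k\}$. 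Given any estimator $\hat\bx = \hat\bx(\bA,\by)$, I would let $\hat\bx^{(b)}$ be its $b$-th block and set $\hat\theta_b := \argmin_{j} \|\hat\bx^{(b)} - \xi\be_j\|_2$; the usual estimation-to-testing argument (if $\hat\theta_b \neq \theta_b$ then $\|\hat\bx^{(b)} - \xi\be_{\theta_b}\|_2 \ge \tfrac12 \|\xi\be_{\theta_b} - \xi\be_{\hat\theta_b}\|_2 = \tfrac{\xi}{\sqrt 2}$) then yields $\bbE_{\btheta}\big[\|\hat\bx - \bx_{\btheta}\|_2^2\big] \ge \tfrac{\xi^2}{2}\sum_{b=1}^{k}\bbP[\hat\theta_b \neq \theta_b]$.

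Next I would lower bound each $\bbP[\hat\theta_b \neq \theta_b]$ by Fano's inequality: $\bbP[\hat\theta_b \neq \theta_b] \ge 1 - \frac{I(\theta_b;\by) + \log 2}{\log(n/k)}$. Because the blocks of $\btheta$ are independent, $I(\theta_b;\by) \le I(\theta_b; \by \mid \theta_{-b})$, and conditioned on $\theta_{-b}$ the observation is Gaussian, $\by \sim \calN\big(\bc_{\theta_{-b}} + \xi\ba^{(b)}_{\theta_b},\, \sigma^2 \bI_m\big)$, where $\ba^{(b)}_j$ is the column of $\bA$ for the $j$-th position of block $b$ and $\bc_{\theta_{-b}}$ does not depend on $\theta_b$. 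Using the standard bound $I \le \frac{1}{(n/k)^2}\sum_{j,j'} D_{\mathrm{KL}}\big(\calN(\mu_j,\sigma^2\bI_m)\,\|\,\calN(\mu_{j'},\sigma^2\bI_m)\big)$ with $\mu_j = \bc_{\theta_{-b}} + \xi\ba^{(b)}_j$, together with $\frac{1}{(n/k)^2}\sum_{j,j'}\|\ba^{(b)}_j - \ba^{(b)}_{j'}\|_2^2 \le \frac{2k}{n}\|\bA^{(b)}\|_\rmF^2$ (where $\|\bA^{(b)}\|_\rmF^2 := \sum_{j \in \text{block }b}\|\ba_j\|_2^2$), I get $I(\theta_b;\by) \le \frac{k\xi^2 \|\bA^{(b)}\|_\rmF^2}{n\sigma^2}$. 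Substituting $\xi^2 = \frac{n\sigma^2 \log(n/k)}{4\|\bA\|_\rmF^2}$ turns this into $I(\theta_b;\by) \le \frac{\beta_b}{4}\log(n/k)$ with $\beta_b := k\|\bA^{(b)}\|_\rmF^2 / \|\bA\|_\rmF^2$, and crucially $\sum_{b=1}^k \beta_b = k$ since the blocks partition the columns of $\bA$.

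Finally I would combine the pieces. Taking $C_0$ large enough that $n \ge C_0 k$ forces $\frac{\log 2}{\log(n/k)} \le \tfrac14$ (e.g.\ $C_0 = 16$), Fano gives $\bbP[\hat\theta_b \neq \theta_b] \ge \max\big(0,\, \tfrac34 - \tfrac{\beta_b}{4}\big) =: f(\beta_b)$. The function $f$ is convex (the positive part of an affine function), so Jensen's inequality gives $\sum_b f(\beta_b) \ge k\, f\big(\tfrac1k \sum_b \beta_b\big) = k\,f(1) = \tfrac k2$, whence $\bbE_{\btheta}\big[\|\hat\bx - \bx_{\btheta}\|_2^2\big] \ge \tfrac{\xi^2}{2}\cdot\tfrac k2 = \tfrac{k\xi^2}{4} = \tfrac{nk\sigma^2 \log(n/k)}{16\|\bA\|_\rmF^2}$, which is in fact somewhat stronger than \eqref{eq:minimax_sigma}, so the stated constant $\tfrac1{64}$ leaves room. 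Taking the supremum over $\bar{\calS}_k(\xi)$ and then the infimum over estimators establishes \eqref{eq:minimax_sigma}; and \eqref{eq:sample_complexity1} is then immediate, since if the minimax risk is at most $C_1 m\sigma^2$ then $\frac{nk\sigma^2\log(n/k)}{64\|\bA\|_\rmF^2} \le C_1 m \sigma^2$, i.e.\ $m \ge \frac{nk\log(n/k)}{64 C_1 \|\bA\|_\rmF^2}$.

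I expect the per-block mutual-information bound to be the main obstacle --- in particular, keeping the dependence on $\bA$ in the ``spread-out'' Frobenius form $\|\bA^{(b)}\|_\rmF^2$ (so that the $\beta_b$ sum to exactly $k$) rather than an operator-norm form, which is what ultimately produces the $\tfrac{n}{\|\bA\|_\rmF^2}$ factor with no extra normalization of $\bA$. A more direct global-Fano argument over a Varshamov--Gilbert packing of $\bar{\calS}_k(\xi)$ would instead require controlling $\tr\big(\bA\,\cov(\bx^*)\,\bA^{\!\top}\big)$ for the restricted packing, which is considerably more delicate; the block-wise reduction sidesteps this entirely by combining the product prior with the additive decomposition of the squared loss.
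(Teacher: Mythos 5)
Your proof is correct, and it takes a genuinely different route from the paper's. The paper works with the full family $\calV$ of signed group-sparse patterns ($\pm\xi$ entries) and applies a single \emph{global} Fano inequality with \emph{approximate recovery} (\cite[Cor.~2]{Duc13}): the key quantities are the ratio $\log\frac{|\calV|}{N_{\max}(k/2)} \ge \frac{k}{3}\log\frac{n}{k}$ and the bound $I(V;\by) \le \frac{\xi^2}{2\sigma^2}\,\bbE[\|\bA V\|_2^2] = \frac{\xi^2}{2\sigma^2}\cdot\frac{k}{n}\|\bA\|_\rmF^2$, the latter using that $\mathrm{Cov}[V] = \frac{k}{n}\bI_n$ exactly because the signs are symmetric (so the covariance computation is actually not delicate, contrary to your closing remark --- the signed family makes all cross-terms vanish and no packing-dependent trace control is needed). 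You instead restrict to positive spikes, exploit the product structure of the prior together with the blockwise additivity of the squared loss, apply \emph{exact}-recovery Fano within each block, and use Jensen on the convex clipped-affine function of $\beta_b = k\|\bA^{(b)}\|_\rmF^2/\|\bA\|_\rmF^2$ to aggregate the per-block error probabilities; the identity $\sum_b \beta_b = k$ plays the role that the isotropic covariance plays in the paper. All the individual steps check out (the $I(\theta_b;\by)\le I(\theta_b;\by\mid\theta_{-b})$ step uses independence of the blocks correctly, the pairwise-KL bound and the computation $\frac{1}{(n/k)^2}\sum_{j,j'}\|\ba^{(b)}_j-\ba^{(b)}_{j'}\|_2^2 \le \frac{2k}{n}\|\bA^{(b)}\|_\rmF^2$ are right, and Jensen is applied in the correct direction), and you even land on the sharper constant $\frac{1}{16}$ in place of $\frac{1}{64}$. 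What each approach buys: the paper's global argument generalizes immediately to settings where the loss does not decompose across blocks, while your blockwise argument is more elementary (only standard Fano, no approximate-recovery machinery or $N_{\max}$ counting) and makes the role of the per-block energy distribution of $\bA$ explicit.
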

\begin{proof}
    See Appendix \ref{sec:pf_sparse}.
\end{proof}

Of course, \eqref{eq:minimax_sigma} trivially remains true when the supremum is taken over any set containing $\bar{\calS}_k(\xi)$, in particular including $\calS_k(x_{\max})$ for any $x_{\max} \ge \xi$.

\subsection{Statement of Main Result}

Combining the preceding auxiliary results, we deduce the following information-theoretic lower bound for compressive sensing with generative models having a rectangular domain.

\begin{theorem} \label{thm:main}
    {\em (Lower Bound for Rectangular Domains)}
    Consider the problem of compressive sensing with $L$-Lipschitz generative models with input domain $B_{\infty}^k(r)$, and i.i.d.~$\calN\big( 0,\frac{\alpha}{m} \big)$ noise.  Let $C_1 > 0$ and $C_{\bA} > 0$ be fixed constants, and assume that $L \ge \Omega\big( \frac{1}{r} \sqrt{\frac{\alpha}{k}} \big)$ with a sufficiently large implied constant.  Then there exists an $L$-Lipschitz generative model $G \,:\, B_{\infty}^k(r) \to \bbR^n$ (and associated output dimension $n$) such that, for any $\bA \in \bbR^{m \times n}$ satisfying $\|\bA\|_{\rmF}^2 = C_{\bA} n$, any algorithm that produces some $\hat{\bx}$ satisfying
    \begin{equation}
        \sup_{ \bx^* \in {\rm Range}(G) } \bbE\big[ \|\hat{\bx} - \bx^*\|_2^2 \big] \le C_1 \alpha \label{eq:final_guarantee}
    \end{equation}
    must also have $m = \Omega\big( k \log\frac{Lr \sqrt{k}}{\sqrt{\alpha}} \big)$.
\end{theorem}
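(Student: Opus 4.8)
The plan is to instantiate the generative model of Section~\ref{sec:gen_model}, reduce to the group-sparse sub-problem, and then invoke Lemma~\ref{lem:group}. First I would fix a constant $\kappa>0$ (depending only on $C_1$ and $C_{\bA}$, to be pinned down at the end) and choose the output dimension so that the saturation level is $x_{\max}=\kappa\sqrt{\alpha/k}$, i.e. $n/k=\frac{Lr\sqrt{k}}{2\kappa\sqrt{\alpha}}$ by Lemma~\ref{lem:Lipschitz} (rounded to an integer, handling non-multiples of $k$ as in the footnote of Section~\ref{sec:gen_model}). The hypothesis that $L\ge\Omega\big(\frac1r\sqrt{\alpha/k}\big)$ with a sufficiently large implied constant (large relative to $\kappa$ and to the constant $C_0$ in Lemma~\ref{lem:group}) then ensures both $n\ge C_0 k$ and $\log(n/k)=\Theta\big(\log\frac{Lr\sqrt k}{\sqrt\alpha}\big)$ with a positive leading constant; the latter is exactly what converts a bound of order $k\log(n/k)$ into the claimed bound.

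Next, since ${\rm Range}(G)=\calS_k(x_{\max})$ contains $\bar{\calS}_k(\xi)$ for every $\xi\le x_{\max}$, any estimator achieving \eqref{eq:final_guarantee} in particular achieves $\sup_{\bx^*\in\bar{\calS}_k(\xi)}\bbE[\|\hat\bx-\bx^*\|_2^2]\le C_1\alpha$ for such $\xi$, which is precisely a group-sparse recovery guarantee at noise level $\sigma^2=\alpha/m$. I would then split on $m$ via the magnitude $\xi^\star:=\sqrt{\frac{n\sigma^2\log(n/k)}{4\|\bA\|_{\rmF}^2}}$ prescribed by Lemma~\ref{lem:group}, which equals $\sqrt{\frac{\alpha\log(n/k)}{4C_{\bA} m}}$ under $\|\bA\|_{\rmF}^2=C_{\bA} n$. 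If $\xi^\star\le x_{\max}$, then $\bar{\calS}_k(\xi^\star)\subseteq{\rm Range}(G)$, Lemma~\ref{lem:group} applies directly, and its ``in particular'' clause with $C_1 m\sigma^2=C_1\alpha$ forces $m\ge\frac{nk\log(n/k)}{64C_1\|\bA\|_{\rmF}^2}=\frac{k\log(n/k)}{64C_1C_{\bA}}$, which is $\Omega\big(k\log\frac{Lr\sqrt k}{\sqrt\alpha}\big)$ by the previous paragraph.

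The complementary case $\xi^\star>x_{\max}$ --- which, after unwinding the definitions, is the same as $m$ being below a constant (depending on $\kappa,C_{\bA}$) times $k\log(n/k)$ --- is the crux, since then the ``optimally tuned'' hard instance leaves the range of $G$. Here I would re-run the Fano/Assouad argument behind Lemma~\ref{lem:group} with the per-coordinate magnitude capped at $x_{\max}$: in this low-measurement regime each of the $k$ block supports is misidentified with probability bounded away from zero, so the minimax risk over $\bar{\calS}_k(x_{\max})\subseteq{\rm Range}(G)$ is at least an absolute constant times $k x_{\max}^2=\kappa^2\alpha$; choosing $\kappa$ large enough that this constant times $\kappa^2$ exceeds $C_1$ contradicts \eqref{eq:final_guarantee}, so this case is impossible. (This capped bound is exactly the $\min\{k x_{\max}^2,\,\cdot\,\}$ behaviour that already appears in the proof of Lemma~\ref{lem:group}, and it is the reason the implied constant in the hypothesis on $L$ must be permitted to depend on $C_1$.) Combining the two cases, \eqref{eq:final_guarantee} forces $m\ge\frac{k\log(n/k)}{64C_1C_{\bA}}=\Omega\big(k\log\frac{Lr\sqrt k}{\sqrt\alpha}\big)$, completing the argument; the main obstacle, as indicated, is the mismatch between the $m$-dependent optimal signal strength $\xi^\star$ and the $m$-independent saturation level $x_{\max}$ forced by the Lipschitz constraint.
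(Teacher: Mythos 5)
Your proposal is correct, and its skeleton coincides with the paper's: the same group-sparse generative model, the same calibration of $n$ so that $x_{\max}=\Theta(\sqrt{\alpha/k})$ (your $\kappa$ is the paper's $\frac{1}{2C'}$), and the same invocation of Lemma~\ref{lem:group} with $\sigma^2=\alpha/m$ and $\|\bA\|_\rmF^2=C_{\bA}n$ to get the $\Omega(k\log\frac{n}{k})=\Omega\big(k\log\frac{Lr\sqrt k}{\sqrt\alpha}\big)$ conclusion, together with the observation that the hypothesis on $L$ is exactly what makes $n\ge C_0k$ and $\log\frac nk=\Theta\big(\log\frac{Lr\sqrt k}{\sqrt\alpha}\big)$ hold. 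Where you genuinely diverge is in the regime you correctly identify as the crux, namely $m$ so small that the Fano-optimal magnitude $\xi^\star=\sqrt{\frac{\alpha\log(n/k)}{4C_{\bA}m}}$ exceeds the saturation level $x_{\max}$ imposed by the Lipschitz constraint. The paper handles this by a simulation/monotonicity argument: it first rules out the window of $m$ where both $\xi^\star\le x_{\max}$ and $m<\frac{k\log(n/k)}{64C_1C_{\bA}}$ hold (choosing $C'$ to make that window nonempty), and then extends the impossibility to all smaller $m$ by padding $\bA$ with zero rows and injecting artificial $\calN\big(0,\frac{\alpha}{m_0}-\frac{\alpha}{m^*}\big)$ noise, so that fewer measurements can be emulated by more. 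You instead re-run the Fano bound with the magnitude capped at $\xi=x_{\max}=\kappa\sqrt{\alpha/k}$; one can check from the appendix computation that when $\xi^\star>x_{\max}$ the mutual information term $\frac{x_{\max}^2}{2\sigma^2}\cdot\frac{k}{n}\|\bA\|_\rmF^2$ falls below $\frac{k}{8}\log\frac nk$, so the risk over $\bar{\calS}_k(x_{\max})\subseteq{\rm Range}(G)$ is at least $\frac{k x_{\max}^2}{16}=\frac{\kappa^2\alpha}{16}$, and taking $\kappa^2>16C_1$ contradicts \eqref{eq:final_guarantee}. Both routes are valid. Yours avoids the padding construction entirely and makes explicit why the saturation level (hence the implied constant in the hypothesis on $L$) must grow with $C_1$; the paper's reduction is more generic (it would survive even if one only had the single-$\xi$ statement of Lemma~\ref{lem:group} as a black box) and is reused verbatim in the ReLU lower bound of Theorem~\ref{thm:wide_deep}. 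One cosmetic note: the $\min\{kx_{\max}^2,\cdot\}$ behaviour you appeal to does not literally appear in the stated proof of Lemma~\ref{lem:group} (which fixes $\xi$ at the optimized value), so your argument requires rerunning that proof with $\xi=x_{\max}$ rather than merely citing the lemma; as verified above, this goes through without difficulty.
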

\begin{proof}
    We are free to choose the output dimension $n$ to our liking for the purpose of proving the theorem, and accordingly, we set
    \begin{equation}
        n = \frac{C' L r k \sqrt{k}}{\sqrt{\alpha}} \label{eq:choice_n}
    \end{equation}
    for some constant $C'$ to be chosen later.  As a result, we have
    \begin{equation}
        k \log \frac{n}{k} = k \log\frac{C' Lr \sqrt{k}}{\sqrt{\alpha}} = \Theta\bigg( k \log\frac{Lr \sqrt{k}}{\sqrt{\alpha}}  \bigg)
    \end{equation}
    since we assumed that $L \ge \Omega\big( \frac{1}{r} \sqrt{\frac{\alpha}{k}} \big)$ with a sufficiently large implied constant.  Hence, it suffices to show that $m = \Omega\big( k \log \frac{n}{k} \big)$ is necessary for achieving \eqref{eq:final_guarantee}.

    To do this, we make use of Lemma \ref{lem:group} on $k$-group-sparse recovery, and the fact that our choice of generative model is able to produce such signals.  Since we assumed that $\|\bA\|_{\rmF}^2 = C_{\bA} n$, the contrapositive form of Lemma \ref{lem:group} states that under the assumptions therein, it is not possible to achieve \eqref{eq:final_guarantee} when
    \begin{equation}\label{eq:sample_complexity2}
        m < \frac{n k \log \frac{n}{k}}{ 64 C_1 C_{\bA}}.
    \end{equation}
    While this has the desired $k\log\frac{n}{k}$ behavior, the result only holds true under the conditions $n \ge C_0 k$ and $x_{\max} \ge \xi = \sqrt{\frac{\alpha \log\frac{n}{k}}{4 C_{\bA} m}}$ from Lemma \ref{lem:group} (after setting $\sigma^2 = \frac{\alpha}{m}$ and $\|\bA\|_\rmF^2 = C_{\bA} n$).  We proceed by checking that the assumptions of Theorem \ref{thm:main} imply that both of these conditions are true.

    The condition $n \ge C_0 k$ follows directly from \eqref{eq:choice_n} and the assumption that $L \ge \Omega\big( \frac{1}{r} \sqrt{\frac{\alpha}{k}} \big)$ with a sufficiently large implied constant.  For the condition on $x_{\max}$, we equate the condition $L = \frac{ n\sqrt{\alpha} }{ C' r k\sqrt{k} }$ from \eqref{eq:choice_n} with the finding $L = \frac{2nx_{\max}}{kr}$ from Lemma \ref{lem:Lipschitz}; canceling the $\frac{n}{kr}$ terms and re-arranging gives
    \begin{equation}
        x_{\max} =  \frac{\sqrt{\alpha}}{2 C' \sqrt{k}}. \label{eq:choice_x_max}
    \end{equation}
    As a result, we have the required condition $x_{\max} \ge \sqrt{\frac{\alpha \log\frac{n}{k}}{4  C_{\bA} m}}$ as long as
    \begin{equation}
        m \ge \frac{ (C')^2 }{ C_{\bA} }  k \log \frac{n}{k}. \label{eq:m_cond}
    \end{equation}
    Hence, we have shown that it is impossible to achieve \eqref{eq:final_guarantee} in the case that both \eqref{eq:sample_complexity2} and \eqref{eq:m_cond} hold.  To make these two conditions consistent, we set $C' = \frac{1}{\sqrt{128 C_1}}$, meaning that \eqref{eq:m_cond} reduces to $m \ge \frac{ 1 }{ 128 C' C_{\bA} }  k \log \frac{n}{k}$.

    The preceding findings show, in particular, that if $m$ is the largest integer satisfying \eqref{eq:sample_complexity2} (henceforth denoted by $m^*$), then it is impossible to achieve \eqref{eq:final_guarantee}.  To show that the same is true for all smaller values of $m$, we use the simple fact that additional measurements can only ever help.  More formally, suppose that $\tilde{\bA}$ is an $m_0 \times n$ measurement matrix achieving \eqref{eq:final_guarantee} for some $m_0 < m^*$.  Consider adding $m^* - m_0$ rows of zeros to $\tilde{\bA}$ to produce $\bA$, so that $\|\bA\|_{\rmF} = \|\tilde{\bA}\|_{\rmF}$.  If one ignores the final $m^* - m_0$ entries of $\by$, then the problem of recovery from $m^*$ measurements is reduced to that from $m_0$ measurements.  In fact, in the latter case, the noise variance is also reduced to $\frac{\alpha}{m^*} < \frac{\alpha}{m_0}$, but to precisely recover the desired setting corresponding to $m_0$ measurements, the recovery algorithm can artificially add $\calN\big(0, \frac{\alpha}{m_0} - \frac{\alpha}{m^*} \big)$ noise to each entry.
\end{proof}

Theorem \ref{thm:main} shows that the scaling laws of Corollary \ref{cor:bora} cannot be improved, regardless of the measurement matrix and recovery algorithm.  The result holds under the assumption that $L \ge \Omega\big( \frac{1}{r} \sqrt{\frac{\alpha}{k}} \big)$ with a sufficiently large implied constant, which is a very mild assumption since for fixed $r$ and $\alpha$, the right-hand side tends to zero as $k$ grows large (whereas typical Lipschitz constants are at least equal to one, if not much higher).\footnote{\label{foot:L_cond}In fact, if we were to have $L = o\big( \frac{1}{r} \sqrt{\frac{\alpha}{k}} \big)$, then the $O\big( k \log\frac{Lr \sqrt{k}}{\sqrt{\alpha}} \big)$ scaling of Corollary \ref{cor:bora} would seemingly not make sense.  The explanation is that in this regime, outputting {\em any} $\hat{\bz} \in B_{\infty}^k(r)$ suffices for the recovery guarantee, and no measurements are needed at all.}

\begin{remark} \label{rem:forall}
    For convenience, we stated Theorem \ref{thm:main} in terms of the supremum (worst-case error) over $\bx^* \in {\rm Range}(G)$, but in fact Lemma \ref{lem:group} is based on lower bounding the worst-case minimax risk by the average over a suitably-chosen distribution on $\bx^*$.  As a result, Theorem \ref{thm:main} remains true when \eqref{eq:final_guarantee} is replaced by the guarantee
    \begin{equation}
        \bbE\big[ \|\hat{\bx} - \bx^*\|_2^2 \big] \le C_1 \alpha \label{eq:final_guarantee_foreach}
    \end{equation}
    with the average being over both $\bx^*$ (with the ``hard'' prior distribution not depending on $\bA$) and the noise.  Hence, while Theorem \ref{thm:main} is stated as a hardness result on a ``for-all'' style (worst-case) guarantee with respect to $\bx^*$, the preceding variant proves the hardness of a ``for-each'' style (average-case) result \cite{Can11}.  In the general setup of Theorems \ref{thm:bora1} and \ref{thm:bora2} where $\bx^*$ may have representation error, it is in fact essential to consider the less stringent ``for-each'' style guarantee, whereas Corollary \ref{cor:bora} reveals that the ``for-all'' style guarantee is indeed possible when there is no representation error.
\end{remark}


\subsection{Extension to the Spherical Domain $B_{2}^k(r)$}

The above analysis focuses on the rectangular domain $B_{\infty}^k(r)$.  At first glance, it may appear non-trivial to use the same ideas to obtain corresponding lower bounds for the spherical domain $B_{2}^k(r)$.  However, in the following, we show that by simply considering the largest possible $\ell_{\infty}$-ball inside the $\ell_{2}$-ball, we can obtain a matching lower bound to Corollary \ref{cor:bora} even for spherical domains.  The fact that this crude approach gives a tight result may appear to be surprising, and is discussed further below.

Let $G_{{\rm rect}, r}$ denote the above-formed generative model for rectangular domains with radius $r$, and note that $\calB_\infty^k\big(\frac{r}{\sqrt k}\big) \subseteq \calB_2^k(r) \subseteq \calB_\infty^k(r)$. To handle the spherical domain $B_{2}^k(r)$, we construct the generative model $G(\cdot)$ as follows: 
\begin{itemize}
    \item For any $\bz \in \calB_\infty^k\big(\frac{r}{\sqrt k}\big)$, we simply let $G(\bz) = G_{{\rm rect}, \frac{r}{\sqrt k}}(\bz)$.  It is only these input values that will be used to establish the lower bound, as these values alone suffice for generating all signals in $\calS_k(x_{\max})$.  However, we still need to set the other values to ensure that Lipschitz continuity is maintained.
    \item To handle the other values of $\bz$, we extend the functions in Figure \ref{fig:toy_gen} (with $\frac{r}{\sqrt k}$ in place of $r$) to take values on the whole real line:  For all values outside the indicated interval, each function value simply remains zero.
    \item The preceding dot point leads to a Lipschitz-continuous function defined on all of $\bbR^k$, and we simply take $G(\cdot)$ to be that function restricted to $\calB_\infty^k(r)$.
%
\end{itemize}
By the first dot point above, we can directly apply Theorem \ref{thm:main} with $\frac{r}{\sqrt k}$ in place of $r$, yielding the following.

\begin{theorem} \label{thm:main2}
    {\em (Lower Bound for Spherical Domains)}
    Consider the problem of compressive sensing with $L$-Lipschitz generative models, with input domain $B_{2}^k(r)$ and i.i.d.~$\calN\big( 0,\frac{\alpha}{m} \big)$ noise.  Let $C_1 > 0$ and $C_{\bA} > 0$ be fixed constants, and assume that $L \ge \Omega\big( \frac{\sqrt \alpha}{r} \big)$ with a sufficiently large implied constant.  Then there exists an $L$-Lipschitz generative model $G\,:\,B_2^{k}(r) \to \bbR^n$ (and associated output dimension $n$) such that, for any $\bA \in \bbR^{m \times n}$ satisfying $\|\bA\|_{\rmF}^2 = C_{\bA} n$, any algorithm that produces some $\hat{\bx}$ satisfying
    \begin{equation}
        \sup_{ \bx^* \in {\rm Range}(G) } \bbE\big[ \|\hat{\bx} - \bx^*\|_2^2 \big] \le C_1 \alpha \label{eq:final_guarantee2}
    \end{equation}
    must also have $m = \Omega\big( k \log\frac{Lr}{\sqrt{\alpha}} \big)$.
\end{theorem}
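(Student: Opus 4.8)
The plan is to reduce the spherical case entirely to Theorem~\ref{thm:main} via the inclusion $B_\infty^k(r/\sqrt k) \subseteq B_2^k(r) \subseteq B_\infty^k(r)$, so that the hard instance already built for the rectangular domain of radius $r' := r/\sqrt k$ doubles as the hard instance for the spherical domain of radius $r$. First I would confirm that the generative model $G$ described immediately before the theorem is indeed $L$-Lipschitz on $B_2^k(r)$: each coordinate map is the ``double-triangular'' function of Figure~\ref{fig:toy_gen} built on $[-r',r']$, padded by zeros outside its active sub-interval. Since the sub-interval endpoints are already mapped to $0$, the padding is continuous and introduces no new slope, so the per-coordinate Lipschitz constant stays $L_0 = \frac{2nx_{\max}}{kr'}$; combining the $k$ blocks exactly as in the proof of Lemma~\ref{lem:Lipschitz} keeps the overall Lipschitz constant at $L_0$, and restriction of this globally defined map to $B_2^k(r)$ only preserves Lipschitz continuity. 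Choosing $n$ and $x_{\max}$ as in the proof of Theorem~\ref{thm:main} with $r'$ in place of $r$ makes $L_0 = L$.

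Next I would note that ${\rm Range}(G) \supseteq G\big(B_\infty^k(r')\big) = {\rm Range}(G_{{\rm rect},r'}) \supseteq \bar{\calS}_k(\xi)$ for the value of $\xi$ prescribed by Lemma~\ref{lem:group}, so every signal invoked in the minimax argument inside the proof of Theorem~\ref{thm:main} also lies in the range of the spherical-domain model. Consequently, the target guarantee \eqref{eq:final_guarantee2}, being a supremum over ${\rm Range}(G)$, implies in particular the same bound over $\bar{\calS}_k(\xi)$, and the contrapositive of Lemma~\ref{lem:group} applies without change; the ``extra measurements only help'' padding argument at the end of the proof of Theorem~\ref{thm:main} likewise carries over verbatim to rule out all smaller $m$. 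Applying Theorem~\ref{thm:main} with $r' = r/\sqrt k$ then produces the necessary condition $m = \Omega\big(k\log\frac{Lr'\sqrt k}{\sqrt\alpha}\big)$, and since $Lr'\sqrt k = Lr$ the two stray factors of $\sqrt k$ cancel and this is exactly $m = \Omega\big(k\log\frac{Lr}{\sqrt\alpha}\big)$; similarly the hypothesis $L \ge \Omega\big(\frac{1}{r'}\sqrt{\alpha/k}\big)$ of Theorem~\ref{thm:main} collapses to $L \ge \Omega\big(\frac{\sqrt\alpha}{r}\big)$, matching the hypothesis of Theorem~\ref{thm:main2}.

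I expect the only load-bearing point — and the one place a reader might fear the crude inscription $B_\infty^k(r/\sqrt k) \subseteq B_2^k(r)$ is lossy — to be precisely this cancellation: the rectangular-domain sample complexity carries a $\sqrt k$ inside the logarithm ($\log\frac{Lr'\sqrt k}{\sqrt\alpha}$), and substituting $r' = r/\sqrt k$ turns it back into $\log\frac{Lr}{\sqrt\alpha}$, which is exactly the spherical upper bound of Corollary~\ref{cor:bora}. Hence there is no $\Theta(\log k)$ or constant-factor slack, and no new probabilistic or information-theoretic ingredient is needed beyond Theorem~\ref{thm:main} and Lemma~\ref{lem:group}; the ``obstacle'' is really just the bookkeeping of transporting the hypotheses and the conclusion cleanly between the two input domains.
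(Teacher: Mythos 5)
Your proposal is correct and follows essentially the same route as the paper: the paper likewise takes the rectangular model with radius $r/\sqrt{k}$, extends it by zero outside $B_\infty^k(r/\sqrt k)$ to preserve Lipschitz continuity, and invokes Theorem~\ref{thm:main} with $r/\sqrt k$ in place of $r$ so that the $\sqrt k$ factors cancel in both the conclusion and the hypothesis on $L$. No gaps.
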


This result establishes the tightness of Corollary \ref{cor:bora} up to constant factors for spherical domains.  The assumption $L \ge \Omega\big( \frac{\sqrt \alpha}{r} \big)$ is different from that of Theorem \ref{thm:main}, but is similarly mild (see the discussion in Footnote \ref{foot:L_cond} on Page \pageref{foot:L_cond}).

The above reduction may seem overly crude, because as $k$ grows large the volume of $B_{\infty}^k\big( \frac{r}{\sqrt k} \big)$ is a vanishingly small fraction of the volume of $B_{2}^k(r)$.  However, as discussed following Theorem \ref{thm:bora1}, the key geometric quantity in the proof of the upper bound is in fact the covering number (see also Appendix \ref{sec:covering}), and both $B_{2}^k(r)$ and $\calB_\infty^k\big(\frac{r}{\sqrt k}\big)$ yield the same scaling laws for the logarithm of the covering number (with a sufficiently small distance parameter). As a result, it is reasonable to expect that these two domains also require the same scaling laws on the number of measurements.


\section{Generative Models Based on ReLU Networks} \label{sec:relu}

In this section, as opposed to considering general Lipschitz-continuous generative models, we provide a more detailed treatment of neural networks with ReLU activations (see Appendix \ref{sec:relu_defs} for a brief overview of the relevant definitions).
We are particularly interested in comparing against the following result from \cite{Bor17}; this result holds even when the domain is unbounded ($\calD = \bbR^k$), so we do not need to distinguish between rectangular and spherical domains.

\begin{theorem} {\em (Upper Bound for ReLU Networks \cite[Thm.~1.1]{Bor17})} \label{thm:bora3}
    Let $G \,:\, \bbR^k \to \bbR^n$ be a generative model given by a $d$-layer neural network with ReLU activations\footnote{As discussed in \cite{Bor17}, the same result holds for any piecewise linear activation function with two components (e.g., leaky ReLU).} and at most $w$ nodes per layer (with $w \ge 2$), and let $\bA \in \bbR^{m \times n}$ be a random measurement matrix whose entries are i.i.d.~with distribution $A_{ij} \sim \calN(0,\frac{1}{m})$. Given the observed vector $\by = \bA\bx^* + \bmeta$ for some $\bx^*$, let $\hat{\bz}$ minimize $\|\by - \bA G(\bz)\|_2$ to within additive error $\epsilon$ of the optimum over $\bz \in \bbR^k$. Then, with a number of measurements satisfying $m \ge C k d\log w$ for a universal constant $C$, the following holds with probability $1-e^{-\Omega(m)}$:
    \begin{equation} 
        \|G(\hat{\bz})-\bx^*\|_2 \le 6 \min_{\bz^* \in \bbR^k} \|G(\bz^*)-\bx^*\|_2 + 3\|\bmeta\|_2 + 2\epsilon. \label{eq:relu_ach}
    \end{equation}
\end{theorem}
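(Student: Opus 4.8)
The plan is to follow the Set-Restricted Eigenvalue Condition (S-REC) strategy underlying Theorem~\ref{thm:bora1}; the only genuinely new ingredient is a combinatorial bound on the geometric complexity of $\mathrm{Range}(G)$ for a ReLU network, which replaces the covering-number argument used there for general Lipschitz $G$. The first step is to show that $\mathrm{Range}(G)$ is contained in a union of $N$ affine subspaces of $\bbR^n$, each of dimension at most $k$, with $\log N = O(kd\log w)$. The argument is inductive over the $d$ layers: on any region of the latent space $\bbR^k$ where the activation patterns of the first $\ell$ layers are fixed, the map $\bz \mapsto (\text{layer-}\ell\text{ output})$ is affine, so the $w$ pre-activations of layer $\ell+1$ are affine functions of $\bz \in \bbR^k$, and by the standard bound on hyperplane arrangements their sign pattern partitions that region into at most $\sum_{i=0}^{k}\binom{w}{i} \le (w+1)^k$ sub-regions. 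Iterating over $d$ layers yields at most $N \le (w+1)^{kd}$ regions, and on each of them $G$ is affine, hence its image lies in an affine subspace of dimension $\le k$. Taking Minkowski differences, $\mathrm{Range}(G) - \mathrm{Range}(G)$ is covered by $\le N^2$ linear subspaces of dimension $\le 2k+1$.

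Next I would verify that a matrix $\bA \in \bbR^{m \times n}$ with $A_{ij} \sim \calN(0,\tfrac1m)$ and $m \ge C k d \log w$ satisfies, with probability $1 - e^{-\Omega(m)}$, the bound $\|\bA(\bx_1 - \bx_2)\|_2 \ge \tfrac12 \|\bx_1 - \bx_2\|_2$ simultaneously for all $\bx_1, \bx_2 \in \mathrm{Range}(G)$. For a single fixed subspace $V$ of dimension $\le 2k+1$, the restriction $\bA|_V$ is, up to the $1/\sqrt m$ scaling, an $m \times (2k+1)$ i.i.d.\ Gaussian matrix whose least singular value concentrates around $\sqrt m - \sqrt{2k+1}$, so $\inf_{v \in V \setminus \{0\}} \|\bA v\|_2/\|v\|_2 \ge \tfrac12$ fails with probability $e^{-\Omega(m)}$ once $m \gtrsim k$; a union bound over the $\le N^2$ subspaces from the first step contributes a factor $e^{O(kd\log w)}$, which is absorbed as soon as $m \ge C k d \log w$. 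Crucially, since $\mathrm{Range}(G)$ is an \emph{exact} finite union of affine pieces, there is no net to approximate and hence no additive slack, which is why \eqref{eq:relu_ach} carries no ``$\delta$'' term, unlike Theorem~\ref{thm:bora1}. Separately, for the deterministic vector $\bx^* - \bar\bx$ with $\bar\bx := \argmin_{\bx \in \mathrm{Range}(G)} \|\bx^* - \bx\|_2$, standard Gaussian concentration gives $\|\bA(\bx^* - \bar\bx)\|_2 \le 2\|\bx^* - \bar\bx\|_2$ with probability $1 - e^{-\Omega(m)}$.

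Finally, conditioning on the two events above, I would run the usual triangle-inequality chain. With $\hat\bx := G(\hat\bz)$ and $\bar\bx := G(\bar\bz)$ for $\bar\bz$ a near-minimizer of the representation error, the $\epsilon$-optimality of $\hat\bz$ gives $\|\bA\hat\bx - \by\|_2 \le \|\bA\bar\bx - \by\|_2 + \epsilon$, so $\|\bA(\hat\bx - \bar\bx)\|_2 \le 2\|\bA\bar\bx - \by\|_2 + \epsilon \le 2\|\bA(\bar\bx - \bx^*)\|_2 + 2\|\bmeta\|_2 + \epsilon$; bounding $\|\bA(\bar\bx - \bx^*)\|_2 \le 2\|\bar\bx - \bx^*\|_2$, combining with the S-REC lower bound $\|\bA(\hat\bx - \bar\bx)\|_2 \ge \tfrac12\|\hat\bx - \bar\bx\|_2$, and applying one further triangle inequality produces a bound of the form $6\min_{\bz^*}\|G(\bz^*) - \bx^*\|_2 + 3\|\bmeta\|_2 + 2\epsilon$, i.e.\ \eqref{eq:relu_ach} (the precise constants follow from slightly more careful bookkeeping of the S-REC parameter and of the $\chi^2$-concentration constants).

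The crux is the first step: obtaining the \emph{refined} count $N \le (w+1)^{kd}$ rather than the trivial $N \le 2^{dw}$ from naively enumerating all activation patterns, which would only give $\log N = O(dw)$ and hence a much weaker $m = O(dw)$. The point that must be tracked carefully is that within a fixed region each layer's pre-activations remain affine functions of the \emph{$k$-dimensional} latent input, so each layer multiplies the region count by only $(w+1)^k$; making this precise, and checking that the image dimension stays $\le k$, is the one place where the ReLU structure rather than generic Lipschitz continuity is used. Everything downstream is the standard Gaussian-matrix / S-REC machinery already invoked in the proof of Corollary~\ref{cor:bora}.
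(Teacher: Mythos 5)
This theorem is imported verbatim from \cite{Bor17} (Thm.~1.1) and the paper contains no proof of it, so there is no internal argument to compare against; your proposal is a correct reconstruction of the original proof in that reference, with the same three ingredients: counting activation regions via the hyperplane-arrangement bound $\sum_{i\le k}\binom{w}{i}\le (w+1)^k$ per layer to write ${\rm Range}(G)$ as a union of $e^{O(kd\log w)}$ affine pieces of dimension at most $k$, a union bound over pairs of pieces to get the S-REC with zero additive slack (which is indeed why no $\delta$ appears in \eqref{eq:relu_ach}), and the standard triangle-inequality chain. The only caveat is the one you already flag: with S-REC parameter $\gamma=\tfrac12$ the chain yields constants $9$ and $4$ rather than $6$ and $3$, so matching the stated constants requires taking $\gamma$ closer to $1$ (e.g., $\tfrac45$), which only inflates the universal constant $C$ in $m\ge Ckd\log w$.
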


It is interesting to note that this result makes no assumptions about the neural network weights (nor domain size), but rather, only the input size, width, and depth.  In addition, we have the following counterpart to Corollary \ref{cor:bora}, whose proof is essentially the same as Corollary \ref{cor:bora} and is therefore omitted.

\begin{corollary} \label{cor:bora3}
    {\em (Simplified Upper Bound for ReLU Networks)}
    Consider the setup of Theorem \ref{thm:bora3} with fixed $\bA$ in place of Gaussian measurements, and with $\bx^* \in {\rm Range}(G)$, no optimization error ($\epsilon = 0$), and i.i.d.~Gaussian noise  $\bmeta\sim \calN(0,\frac{\alpha}{m} \bI_m)$ with $\alpha > 0$.  Then, when $m \ge C k d\log w$ for a universal constant $C$, there exists some $\bA$ with $ \|\bA\|_\rmF^2 = n$ such that
    \begin{equation}
        \sup_{ \bx^* \in {\rm Range}(G) } \bbE\big[ \|G(\hat{\bz}) - \bx^*\|_2^2 \big] \le C' \alpha \label{eq:relu_cor}
    \end{equation}
    for a universal constant $C'$.
\end{corollary}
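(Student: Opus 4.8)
The plan is to follow the proof of Corollary~\ref{cor:bora} almost verbatim, with Theorem~\ref{thm:bora3} playing the role of Theorem~\ref{thm:bora1}. The only structural difference is that for ReLU networks the relevant set-restricted eigenvalue condition holds with \emph{zero} additive slack -- which is exactly why \eqref{eq:relu_ach} carries no $\delta$ term, and why the statement of Corollary~\ref{cor:bora3} needs no $\delta$ at all -- and this actually makes the argument slightly cleaner than its Lipschitz counterpart.

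First I would inspect the proof of Theorem~\ref{thm:bora3} in \cite{Bor17} and isolate its high-probability events. Exactly as in the proof of Corollary~\ref{cor:bora}, there are two: (i) $\bA$ satisfies the S-REC on $S = {\rm Range}(G)$ with multiplicative constant $\gamma = \tfrac12$ and additive slack $0$ -- this is where the hypothesis $m \ge C k d\log w$ enters, since ${\rm Range}(G)$ is a union of $\exp(O(kd\log w))$ facets, each contained in an affine subspace of dimension at most $k$, over which a Gaussian matrix with $\Omega(kd\log w)$ rows is a two-sided $\ell_2$-embedding by a union bound (see \cite{Bor17}); and (ii) $\|\bA(\bx^* - \bar\bx)\|_2 \le 2\|\bx^* - \bar\bx\|_2$, where $\bar\bx$ is the projection of $\bx^*$ onto ${\rm Range}(G)$. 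Under the simplifying assumption $\bx^* \in {\rm Range}(G)$ we have $\bar\bx = \bx^*$, so (ii) is vacuous and only the S-REC matters. Writing $\bx^* = G(\bz^*)$, exactness of the minimization ($\epsilon = 0$) gives $\|\by - \bA G(\hatbz)\|_2 \le \|\by - \bA G(\bz^*)\|_2 = \|\bmeta\|_2$, hence $\|\bA(G(\hatbz) - \bx^*)\|_2 \le 2\|\bmeta\|_2$ by the triangle inequality, and the slack-free S-REC then yields $\|G(\hatbz) - \bx^*\|_2 \le \tfrac{2}{\gamma}\|\bmeta\|_2 = 4\|\bmeta\|_2$.

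Next I would pass to a fixed, normalized $\bA$. Since event (i) holds for random Gaussian $\bA$ with probability $1 - e^{-\Omega(m)}$, and by standard $\chi^2$-concentration \cite[Sec.~2.1]{wainwright2019high} the same $\bA$ obeys $\|\bA\|_\rmF^2 = n(1+o(1))$ with probability $1 - e^{-\Omega(m)}$, the two events intersect with positive probability; fix such an $\bA$ and rescale it by $\sqrt{n}/\|\bA\|_\rmF$ so that $\|\bA\|_\rmF^2 = n$ exactly, which multiplies $\gamma$ by $1 - o(1)$ and therefore only affects constants. For this fixed $\bA$ the S-REC is a purely deterministic property, so the bound $\|G(\hatbz) - \bx^*\|_2 \le \tfrac{2}{\gamma}\|\bmeta\|_2$ holds for \emph{every} $\bx^* \in {\rm Range}(G)$ and \emph{every} realization of $\bmeta$; in particular there is no residual low-probability bad event whose error contribution would need to be controlled -- the one place where the unboundedness of ${\rm Range}(G)$ for ReLU networks would otherwise be a nuisance. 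Squaring and taking expectation over $\bmeta \sim \calN(0, \tfrac{\alpha}{m}\bI_m)$, for which $\bbE[\|\bmeta\|_2^2] = \alpha$, gives $\sup_{\bx^* \in {\rm Range}(G)} \bbE[\|G(\hatbz) - \bx^*\|_2^2] \le \tfrac{4}{\gamma^2}\alpha \le C'\alpha$, which is \eqref{eq:relu_cor}. I do not expect a genuine obstacle here: the identification of the two events, the $\chi^2$-concentration used for normalization, and the squaring-and-averaging step are all identical to the proof of Corollary~\ref{cor:bora}, and the single new observation -- that the zero-slack S-REC makes the recovery bound hold deterministically in $(\bx^*,\bmeta)$ for the chosen $\bA$, so the supremum over the unbounded range commutes with the noise expectation -- is routine, which is why the paper is content to omit the details.
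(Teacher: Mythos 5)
Your proposal is correct and follows exactly the route the paper intends: the paper itself states that the proof of Corollary~\ref{cor:bora3} is ``essentially the same as Corollary~\ref{cor:bora} and is therefore omitted,'' and your argument is precisely that adaptation --- isolate the S-REC and the $\|\bA(\bx^*-\bar\bx)\|_2$ event, discard the latter under $\bx^*\in{\rm Range}(G)$, pass to a fixed normalized $\bA$ via $\chi^2$-concentration, and square and average over the noise. Your added observation that the S-REC for ReLU ranges holds with zero additive slack (explaining the absence of $\delta$) is accurate and consistent with the $\delta$-free form of \eqref{eq:relu_ach}.
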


Before establishing corresponding lower bounds, it is useful to first discuss how the generative model from Figure \ref{fig:toy_gen} can be constructed using ReLU networks; this is done in Section \ref{sec:constructing}.  In Section \ref{sec:width_depth}, we build on these ideas to form different (but related) generative models that properly reveal the dependence of the number of measurements on the network depth and width.

\subsection{Constructing the Generative Model Used in Theorem \ref{thm:main}} \label{sec:constructing}

In the case of a rectangular domain, the triangular shapes of the mappings in Figure \ref{fig:toy_gen} are such that the generative model $G(z)$ can {\em directly} be implemented as a ReLU network with a single hidden layer (i.e., two layers in total).  Indeed, this would remain true if the mappings between $z_i$ and $x_j$ (with $x_j$ being a single entry of $\bx^{(i)}$) in Figure \ref{fig:toy_gen} were replaced by {\em any} piecewise linear function \cite[Thm.~2.2]{Aro18}.

A limitation of this interpretation as a two-layer ReLU network is that for increasing values of $L$, the corresponding network has increasingly large weights.  In particular, for fixed values of $r$ and $\alpha$, a re-arrangement of \eqref{eq:choice_n} gives $L = \Theta\big( \frac{n}{k\sqrt{k}} \big)$, which amounts to large weights in the case that $n \gg k\sqrt{k}$.

 In the following, we argue that the construction of Figure \ref{fig:toy_gen} can be implemented using a {\em deep} ReLU network with {\em bounded} weights.  To see this, we use similar ideas to those used to generate rapidly-varying (e.g., ``sawtooth'') functions using ReLU networks \cite{Tel15,Tel16} (see also Section \ref{sec:width_depth}). 

Consider the functions $f(z)$ and $g(z)$ shown in Figure \ref{fig:deep}. If we compose $g(\cdot)$ with itself $D$ times, then we obtain a function equaling $-r$ for $z \in [-r,-r 2^{-D}]$, equaling $r$ for $z \in [r2^D,r]$, and linearly interpolating in between.  By further composing this function with $f(\cdot)$, we obtain a function of the form shown in Figure \ref{fig:deep} (Right), which matches those in Figure \ref{fig:toy_gen}.  By incorporating suitable offsets into this procedure, one can obtain the same ``double triangular'' shape shifted along the horizontal axis, and hence recover all of the mappings shown in Figure \ref{fig:toy_gen}.

Since both $f$ and $g$ are piecewise linear with slope at most $2$, both of these functions can be implemented with a single hidden layer with $O(1)$-bounded weights and $O(r)$-bounded offsets.  To bring the $2^{-D}$-width ``double triangular'' region down to the desired width $\frac{2rk}{n}$ in Figure \ref{fig:toy_gen}, we take $D = O(\log \frac{n}{kr})$ compositions of $g$ (each of which adds another two layers to the network).\footnote{The case that $\frac{n}{2rk}$ is not a power of two can be handled by slightly modifying the function $g(\cdot)$ in Figure \ref{fig:deep}, i.e., moving the changepoints currently occurring at $z = -r/2$ and $z = r/2$.}  Then, recall from Figure \ref{fig:toy_gen} that the final generative model consists of $n$ one-dimensional functions; we let the network incorporate these in parallel.  Combining these findings, we have the following.

\begin{theorem} \label{thm:bounded}
    {\em (Construction of Generative Model Using a ReLU Network)}
    Consider the setup of Theorem \ref{thm:main}, and suppose that $x_{\max} = O(1)$. Then, the generative model therein can be realized using a ReLU network with depth $O(\log \frac{n}{kr})$, width $O(n)$, weights bounded by $O(1)$, and offsets bounded by $O(r)$.
\end{theorem}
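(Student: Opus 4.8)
The plan is to formalize the construction sketched in the discussion immediately preceding the theorem. First I would recall that, by the structure of $G$ in Figure~\ref{fig:toy_gen}, the map decomposes into $k$ independent blocks, and within the $i$-th block each of the $\frac{n}{k}$ output coordinates is a fixed one-dimensional ``double-triangular'' function of the single latent coordinate $z_i$: it is the double triangle of height $x_{\max}$ on the relevant subinterval of $[-r,r]$ of width $\frac{2rk}{n}$, and identically zero outside that subinterval. Hence it suffices to (i) realize one such one-dimensional bump by a ReLU subnetwork with depth $O\big(\log\frac{n}{kr}\big)$, weights $O(1)$ and offsets $O(r)$, and then (ii) run the $n$ resulting subnetworks in parallel, fanning out each $z_i$ to the $\frac{n}{k}$ subnetworks of its block in the first layer.

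For step (i) I would use the two one-dimensional building blocks $f$ and $g$ of Figure~\ref{fig:deep}. Each is piecewise linear with $O(1)$ breakpoints (located at points of magnitude $O(r)$) and slope at most $2$, so by the standard fact that a one-dimensional continuous piecewise-linear function with $p$ pieces is exactly representable by a ReLU network with one hidden layer of $O(p)$ nodes \cite[Thm.~2.2]{Aro18}, each of $f,g$ is realized with a single hidden layer, weights bounded by $O(1)$, and offsets bounded by $O(r)$. Composing $g$ with itself $D=\Theta\big(\log\frac{n}{kr}\big)$ times produces a map that is affine with slope $2^{D}$ on a central interval of width $\frac{2rk}{n}$ and constant (saturated) outside it; post-composing with a suitably shifted copy of $f$ turns this central interval into the double-triangular profile of height $x_{\max}$ while leaving the output equal to $0$ outside, which is exactly the desired bump after a final horizontal offset that moves the active region onto the $j$-th subinterval (that offset is an integer multiple of $\frac{2rk}{n}\le 2r$, hence $O(r)$). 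The non-dyadic case, where $\frac{n}{2rk}$ is not a power of two, is handled by the footnoted perturbation of the breakpoints of $g$. The resulting subnetwork has depth $2(D+1)+O(1)=O\big(\log\frac{n}{kr}\big)$, weights $O(1)$, and offsets $O(r)$; the hypothesis $x_{\max}=O(1)$ ensures the final scaling to height $x_{\max}$ does not introduce large weights either.

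For step (ii), since all $n$ subnetworks have identical depth, stacking them with block-diagonal weight matrices gives a single layered ReLU network of depth $O\big(\log\frac{n}{kr}\big)$ whose layers carry $O(1)$ nodes per output coordinate, i.e.\ width $O(n)$; the first layer consists only of the fan-out maps $z_i\mapsto(z_i,\dots,z_i)$ together with the block-dependent $O(r)$ pre-offsets, which again uses only $O(1)$ weights and $O(r)$ offsets. Combining the bounds from (i) and (ii) yields the claimed ReLU realization of $G$.

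The main obstacle I anticipate is purely the bookkeeping in step (i): verifying that $f\circ g^{\circ D}$ reproduces the double-triangular shape \emph{exactly} (not merely approximately), that the saturation of $g^{\circ D}$ really forces the output to vanish identically outside the target subinterval, and that every weight stays $O(1)$ and every offset stays $O(r)$ through all $D$ compositions and the positioning shift. These are the same sawtooth-style calculations as in \cite{Tel15,Tel16}, so no new ideas are required; the only genuine subtlety is the non-power-of-two case, which needs the modified $g$ together with a short check that its breakpoints remain at magnitude $O(r)$ and its slope at most $2$.
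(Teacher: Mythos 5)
Your proposal is correct and follows essentially the same route as the paper: implement the two piecewise-linear building blocks $f$ and $g$ of Figure \ref{fig:deep} as single-hidden-layer ReLU subnetworks with $O(1)$ weights and $O(r)$ offsets, compose $g$ with itself $D = O(\log\frac{n}{kr})$ times to shrink the active region to width $\frac{2rk}{n}$, post-compose with a shifted $f$ to obtain the double-triangular bump, and stack the $n$ resulting subnetworks in parallel to get width $O(n)$. The bookkeeping points you flag (exactness of the composition, saturation outside the target subinterval, the non-power-of-two case) are exactly the ones the paper handles via the sawtooth-style constructions of \cite{Tel15,Tel16} and the footnoted modification of $g$.
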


Note that the assumption $x_{\max} = O(1)$ is mild in view of \eqref{eq:choice_x_max}, and even if one wishes to handle more general values, it is not difficult to generalize the above arguments accordingly.

\begin{figure*}
    \begin{center}
        \includegraphics[width=0.7\textwidth]{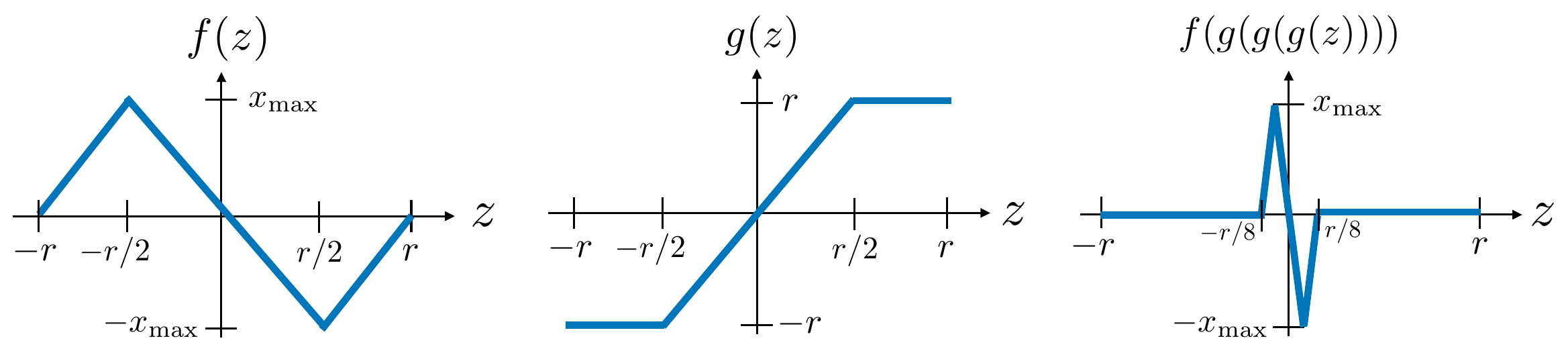}
    \end{center}
    \caption{Illustration of how the functions in Figure \ref{fig:toy_gen} can be generated using a deep ReLU network with bounded weights.  The right-most function is centered at zero, but by including additional offset terms, this can be shifted to different locations in $[-r,r]$.} \label{fig:deep} 
\end{figure*}

\subsection{Understanding the Dependence on Width and Depth} \label{sec:width_depth}

Thus far, we have considered forming a generative model $G \,:\, \bbR^k \to \bbR^n$ capable of producing $k$-group-sparse signals, which leads to a lower bound of $m = \Omega(k \log n)$.  While this precise approach does not appear to be suited to properly understanding the dependence on depth and width in Theorem \ref{thm:bora3}, we now show that a variant indeed suffices: We form a wide and/or deep ReLU network $G \,:\, \bbR^k \to \bbR^n$ capable of producing all $(kk_0)$-group-sparse signals for some $k_0$ that may be much larger than one. 

It is instructive to first consider the case $k = 1$ and $k_0 > 1$, and to construct a {\em non-continuous} generative model that will later be slightly modified to be continuous.  For later convenience, we momentarily denote the output length by $n_0$.  We consider the interval $[0,1]$, which we view as being split into $\frac{1}{ 2^{k_0} (\frac{n_0}{k_0})^{k_0} }$ small intervals of equal length; note that $2^{k_0} (\frac{n_0}{k_0})^{k_0}$ is the number of possible {\em signed} sparsity patterns for group-sparse signals of length $n_0$ with exactly $k_0$ non-zero entries.  The idea is to let each value of $z$ corresponding to the mid-point of a given length-$\frac{1}{ 2^{k_0} (\frac{n_0}{k_0})^{k_0} }$ interval in $[0,1]$ produce a signal with a different signed sparsity pattern.  

In more detail, we consider the following (see Figure \ref{fig:toy_gen2} for an illustration):
\begin{itemize}
    \item (Coarsest scale) The interval $[0,1]$ is split into $\frac{2n_0}{k_0}$ intervals of length $\frac{k_0}{2n_0}$.  Then:
    \begin{itemize}
        \item If $z$ lies in the first interval, we have $x_1 = \xi$, and if $z$ lies in the second interval, we have $x_1 = - \xi$ (in all other cases, $x_1 = 0$);
        \item If $z$ lies in the third interval, we have $x_2 = \xi$, and if $z$ lies in the forth interval, we have $x_2 = - \xi$ (in all other cases, $x_2 = 0$);
        \item This continues similarly for $x_3,\dotsc,x_{\frac{n_0}{k_0}}$.
    \end{itemize}
    \item (Second coarsest scale) Each interval at the coarsest scale is split into  $\frac{2n_0}{k_0}$ equal sub-intervals of length $\big(\frac{k_0}{2n_0}\big)^2$.  Then, within each of the coarsest intervals:
    \begin{itemize}
        \item If $z$ lies in the first sub-interval, we have $x_{\frac{n_0}{k_0} + 1} = \xi$, and if $z$ lies in the second sub-interval, we have $x_{\frac{n_0}{k_0} + 1} = - \xi$ (in all other cases, $x_{\frac{n_0}{k_0} + 1} = 0$);
        \item If $z$ lies in the third sub-interval, we have $x_{\frac{n_0}{k_0} + 2} = \xi$, and if $z$ lies in the forth sub-interval, we have $x_{\frac{n_0}{k_0} + 2}  = - \xi$ (in all other cases, $x_{\frac{n_0}{k_0} + 2}  = 0$);
        \item This continues similarly for $x_{\frac{n_0}{k_0} + 3} ,\dotsc,x_{\frac{2n_0}{k_0}}$.
    \end{itemize}
    \item We continue recursively until we are at the finest scale with sub-intervals of length $\big(\frac{k_0}{2n_0}\big)^{k_0}$ that dictate the values of $x_{\frac{(k_0-1)n_0}{k_0} + 1} ,\dotsc,x_{n_0}$.
\end{itemize}

\begin{figure*}
    \begin{center}
        \includegraphics[width=0.87\textwidth]{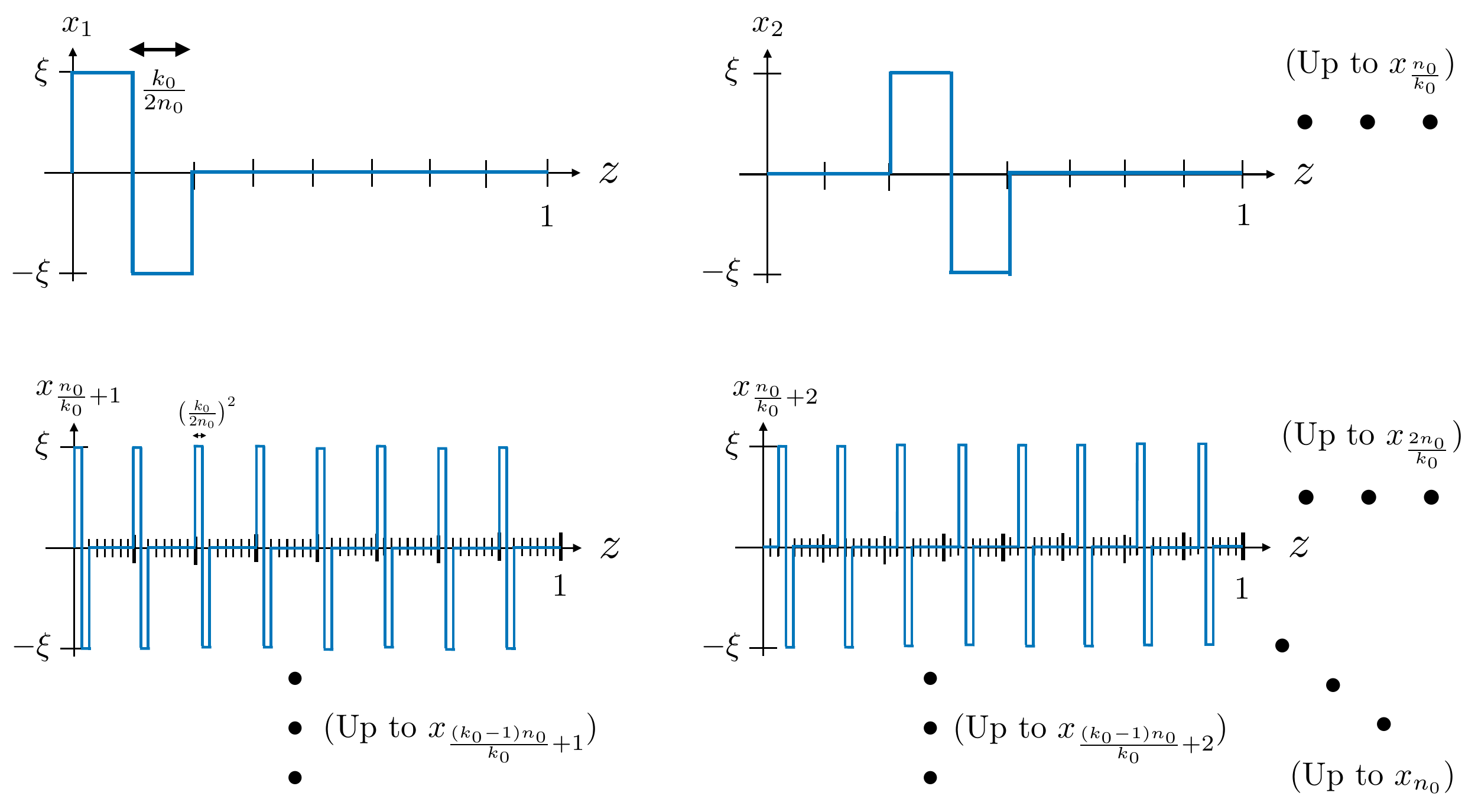}
    \end{center}
    \caption{Recursively defined generative model mapping a single input $z$ to $\bbR^{n_0}$.  Note that this figure depicts an idealized version in which the mapping is discontinuous; the final generative model used replaces each vertical line by a sharp (but finite-gradient) transition, creating pulses that are trapezoidal rather than rectangular.} \label{fig:toy_gen2} 
\end{figure*}

While the discontinuous points in Figure \ref{fig:toy_gen2} are problematic when it comes to implementation with a ReLU network, we can overcome this by simply replacing them by straight-line transitions having a finite slope (i.e., the rectangular shapes become trapezoidal), while being sufficiently sharp so that all the input values at the midpoints of the length-$\frac{1}{ 2^{k_0} (\frac{n_0}{k_0})^{k_0} }$ intervals produce the same outputs as the idealized function described above.
Then, ReLU-based implementation is mathematically possible, since the mappings are piecewise linear \cite[Thm.~2.2]{Aro18}.

The above construction generates all $k_0$-group-sparse signals in $\bbR^{n_0}$ with non-zero entries equaling $\pm \xi$.  To see this, one can consider ``entering'' the appropriate coarsest region according to the desired location and sign in the first block (of length $\frac{n_0}{k_0}$) of the $k_0$-sparse signal, then recursively entering the appropriate second-coarsest region based on the second block, and so on.  The final $z$ value reached will correspond to the desired $k_0$-group-sparse signal $\bx_0 \in \bbR^{n_0}$.

To generalize the above ideas to $k$-input generative models, we form $k$ such functions in parallel, thereby allowing the generation of $(kk_0)$-group-sparse signals in $\bbR^n$ (with $n = n_0 k$) having non-zero entries $\pm \xi$.  Then, we can use Lemma \ref{lem:group} and a suitable choice of $\xi$ to deduce the following.

\begin{theorem} \label{thm:wide_deep}
    {\em (Lower Bound for ReLU Networks)}
    Fix $C_1,C_{\bA} > 0$, and consider the problem of compressive sensing with generative models under i.i.d.~$\calN\big( 0,\frac{\alpha}{m} \big)$ noise, a measurement matrix $\bA \in \bbR^{m \times n}$ satisfying $\|\bA\|_{\rmF}^2 = C_{\bA} n$, and the above-described generative model $G \,:\, \bbR^k \to \bbR^n$ with parameters $k$, $k_0$, $n_0$, and $\xi$.  Then, if $n_0 \ge C_0 k_0$ for an absolute constant $C_0$, then there exists a constant $C_2 = \Theta(1)$ such that the choice $\xi = \sqrt{\frac{ C_2 \alpha }{ k }}$ yields the following:
    \begin{enumerate}
        \item Any algorithm that produces some $\hat{\bx}$ satisfying $\sup_{ \bx^* \in {\rm Range}(G) } \bbE\big[ \|\hat{\bx} - \bx^*\|_2^2 \big] \le C_1 \alpha$ must also have $m = \Omega\big( kk_0 \log\frac{n}{kk_0} \big)$ (or equivalently $m = \Omega\big( kk_0 \log\frac{n_0}{k_0} \big)$, since $n = n_0 k$).
        \item The generative function $G$ can be implemented as a ReLU network with any of the following combinations of the depth $d$ and width $w$:\footnote{More precisely, in the second and third cases here, we only require $G$ to match the description above for $\bz \in [0,1]^k$.  The generative model may take non-zero values outside this range, without affecting the first part of the theorem. This technicality arises because we will construct periodic functions of the kind in Figure \ref{fig:toy_gen2} for which the number of repetitions is a power of two, whereas $\frac{n_0}{k_0}$ may not be a power of two. \label{foot:domain}}
        \begin{enumerate}
            \item $d=2$ and $w = O( k (\frac{n_0}{k_0})^{k_0} )$;
            \item $d = O\big( k_0 \log \frac{n_0}{k_0} \big)$ and $w = O(n)$;
            \item $d$ exhibits arbitrary scaling subject to $d = \omega(1)$ and $d = o\big( k_0 \log \frac{n_0}{k_0} \big)$, and $w$ is chosen as a function of $d$ such that $\log w = O\big( \frac{k_0}{d} \log \frac{n_0}{k_0} \big) + \log n$.
        \end{enumerate}
        These three cases respectively give (a) $k d \log w = O\big( k k_0 \log \frac{n_0}{k_0} + k \log k \big)$; (b) $kd \log w = O\big( k k_0 \log \frac{n_0}{k_0} \cdot \log n \big)$; and (c) $kd \log w = O(k k_0 \log \frac{n_0}{k_0} + kd \log n)$.
    \end{enumerate}
\end{theorem}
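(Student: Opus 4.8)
The proof naturally splits into the minimax lower bound (Part~1) and the three explicit ReLU realizations (Part~2), which are essentially independent.

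\textbf{Part~1.} The key observation is that the generative model of Section~\ref{sec:width_depth} satisfies $\bar{\calS}_{kk_0}(\xi)\subseteq{\rm Range}(G)$: feeding $G$ the midpoints of the finest-scale intervals in each of its $k$ input coordinates produces exactly the $(kk_0)$-group-sparse signals whose $kk_0$ non-zero entries (one per block of size $\tfrac{n}{kk_0}=\tfrac{n_0}{k_0}$) all have magnitude $\xi$. These midpoints lie in $[0,1]^k$, so by Footnote~\ref{foot:domain} it is harmless that in the depth/width-constrained realizations $G$ may be nonzero outside $[0,1]^k$. Hence the worst-case risk over ${\rm Range}(G)$ is at least that over $\bar{\calS}_{kk_0}(\xi)$, and I would invoke Lemma~\ref{lem:group} with its parameters ``$k$'', ``$n$'', ``$\sigma^2$'', ``$\|\bA\|_\rmF^2$'' set to $kk_0$, $n$, $\tfrac{\alpha}{m}$, $C_{\bA}n$; the hypothesis $n_0\ge C_0k_0$ is exactly $n\ge C_0\cdot kk_0$. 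The constant $C_2$ is then pinned down by requiring the magnitude $\xi$ in the theorem to coincide with the critical magnitude $\sqrt{\tfrac{n\sigma^2\log(n/(kk_0))}{4\|\bA\|_\rmF^2}}$ of Lemma~\ref{lem:group} at a number of measurements $m^\star=\Theta\big(kk_0\log\tfrac{n}{kk_0}\big)$; this is the direct analogue of the $x_{\max}\leftrightarrow\xi$ matching in the proof of Theorem~\ref{thm:main}.

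\textbf{Part~1, continued.} At $m=m^\star$, Lemma~\ref{lem:group} gives $\inf_{\hat{\bx}}\sup_{\bx^*\in\bar{\calS}_{kk_0}(\xi)}\bbE_{\bx^*}[\|\bx^*-\hat{\bx}\|_2^2]>C_1\alpha$, so a guarantee $\sup_{\bx^*\in{\rm Range}(G)}\bbE[\|\hat{\bx}-\bx^*\|_2^2]\le C_1\alpha$ is impossible with $m^\star$ measurements. To upgrade this to the statement that $m=\Omega\big(kk_0\log\tfrac{n}{kk_0}\big)$ is necessary (impossibility for all $m\le m^\star$), I would reuse the padding argument from the end of the proof of Theorem~\ref{thm:main}: append $m^\star-m_0$ zero rows to a hypothetical successful $m_0\times n$ matrix (preserving $\|\bA\|_\rmF^2=C_{\bA}n$), note that the resulting $m^\star$-measurement instance has smaller per-coordinate noise variance $\tfrac{\alpha}{m^\star}<\tfrac{\alpha}{m_0}$, and let the algorithm inject independent $\calN\big(0,\tfrac{\alpha}{m_0}-\tfrac{\alpha}{m^\star}\big)$ noise into the first $m_0$ coordinates to reproduce the $m_0$-measurement problem exactly, contradicting the impossibility at $m^\star$. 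Since Lemma~\ref{lem:group} proceeds through a prior on $\bx^*$ that does not depend on $\bA$ (cf.\ Remark~\ref{rem:forall}), the bound is in fact average-case; rewriting $\log\tfrac{n}{kk_0}=\log\tfrac{n_0}{k_0}$ gives the alternative expression in the theorem.

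\textbf{Part~2.} I would give the three realizations in turn. (a)~Each output coordinate, viewed as a function of its input $z_i$, is a continuous piecewise-linear map (the discontinuous template of Figure~\ref{fig:toy_gen2} with vertical edges replaced by steep trapezoidal transitions), hence by \cite[Thm.~2.2]{Aro18} expressible as a two-layer ReLU network of hidden width proportional to its number of linear pieces; the finest of the $k_0$ scales dominates, giving $O\big(k(n_0/k_0)^{k_0}\big)$ pieces in total and thus $d=2$, $w=O\big(k(n_0/k_0)^{k_0}\big)$. (b)~Generate the nested ``comb'' of Figure~\ref{fig:toy_gen2} by repeated self-composition of a fixed bounded-slope triangle/sawtooth map, as in the sawtooth constructions of \cite{Tel15,Tel16} and Section~\ref{sec:constructing}: realizing the $\Theta(n_0/k_0)$ periods needed at one scale costs $O\big(\log\tfrac{n_0}{k_0}\big)$ extra layers, and there are $k_0$ nested scales, so $d=O\big(k_0\log\tfrac{n_0}{k_0}\big)$, with $w=O(n)$ to carry the $n$ accumulated outputs over the $k$ parallel copies. (c)~Interpolate: for $d=\omega(1)$, $d=o\big(k_0\log\tfrac{n_0}{k_0}\big)$, fan out to width $w$ at each of $\Theta(d)$ composition stages so that the $\Theta(d)$ stages realize $\sim w^{\Theta(d)}$ periods; matching $w^{\Theta(d)}\gtrsim(2n_0/k_0)^{k_0}$ forces $\log w=O\big(\tfrac{k_0}{d}\log\tfrac{n_0}{k_0}\big)$, and the extra ``$+\log n$'' absorbs the width $\Omega(n)$ needed to emit $n$ coordinates. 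Substituting these $(d,w)$ into $kd\log w$ gives the three claimed bounds.

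The step I expect to be hardest is Part~2(c): constructing, for every $(d,w)$ with $d\log w=\Omega\big(k_0\log\tfrac{n_0}{k_0}\big)$, an explicit width-$w$, depth-$\Theta(d)$ ReLU network realizing the required $(2n_0/k_0)^{k_0}$-periodic comb, which forces a careful interleaving of the ``free'' period-doubling obtained from composition with the ``paid'' $w$-fold fan-out at each layer, and which also requires the bookkeeping of Footnote~\ref{foot:domain} when $n_0/k_0$ is not a power of two (handled, as in Section~\ref{sec:constructing}, by perturbing the changepoints of the elementary sawtooth map and only insisting that $G$ match the idealized template on $[0,1]^k$). By contrast, parts (a)--(b) and the whole of Part~1 are routine given Lemma~\ref{lem:group}, \cite[Thm.~2.2]{Aro18}, and the padding trick already used in the proof of Theorem~\ref{thm:main}.
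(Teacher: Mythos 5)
Your proposal is correct and follows essentially the same route as the paper: Part 1 via Lemma \ref{lem:group} applied with $kk_0$ in place of $k$ (pinning down $\xi$, hence $C_2$, by matching it to the critical magnitude at $m^*=\Theta(kk_0\log\frac{n}{kk_0})$) together with the zero-row padding argument from the proof of Theorem \ref{thm:main}, and Part 2 via piece-counting with \cite[Thm.~2.2]{Aro18} for (a), Telgarsky-style sawtooth composition for (b), and the interleaved width-fan-out/depth-composition construction for (c). The only cosmetic difference is that in (b) you account for the depth as $k_0$ nested scales each costing $O(\log\frac{n_0}{k_0})$ layers, whereas the paper bounds the depth by the finest scale's $O(\log R)$ with $R=(\frac{n_0}{k_0})^{k_0-1}$; these are the same bound.
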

\begin{proof}
    The first claim is proved similarly to the proof of Theorem \ref{thm:main}, so we only outline the differences.  In accordance with Lemma \ref{lem:group} (with $kk_0$ in place of $k$), let $m^*$ be the largest integer smaller than $\frac{kk_0 \log \frac{n}{kk_0}}{64 C_1 C_{\bA}}$.  Then, Lemma \ref{lem:group} states that if $m = m^*$ and $\xi = \sqrt{ \frac{\alpha \log \frac{n}{kk_0}}{4m^*} }$, it is not possible to achieve $\sup_{ \bx^* \in {\rm Range}(G) } \bbE\big[ \|G(\hat{\bz}) - \bx^*\|_2^2 \big] \le C_1 \alpha$.  Substituting this definition of $m^*$ into this choice of $\xi$ gives the claimed behavior $\xi = \sqrt{\frac{ C_2 \alpha }{ k }}$ with $C_2 = \Theta(1)$.  The first claim follows by using the argument at the end of the proof of Theorem \ref{thm:main} to argue that since the recovery goal cannot be attained when $m = m^*$, it also cannot be attained when $m < m^*$.

    For part (a) of the second claim, we observe that each mapping from $z$ to $x_i$ in Figure \ref{fig:toy_gen2} has a bounded number of ``pulses'', and at the $\ell$-th scale, the number of pulses is $2 \big( \frac{n_0}{k_0} \big)^{\ell-1}$.  Summing over $\ell = 1,\dotsc,k_0$ gives a total of at most $2 \big( \frac{n_0}{k_0} \big)^{k_0}$ pulses.  Recall also that these rectangles are replaced by trapezoidal shapes to make them implementable.  Hence, we can apply the well-known fact that any piecewise-linear function with $N$ pieces can be implemented using a ReLU network of width $O(N)$ with a single hidden layer \cite[Thm.~2.2]{Aro18}, and in our case we have $N = O\big( \big( \frac{n_0}{k_0} \big)^{k_0}  \big)$.  The desired claim follows by multiplying by $k$ in accordance with the fact that we implement the network of Figure \ref{fig:toy_gen2} in parallel $k$ times.

    Due to the periodic nature of the signals in Figure \ref{fig:toy_gen2}, part (b) also follows using well-established ideas \cite{Tel15,Tel16}.  We would like to produce trapezoidal pulses at regular intervals similarly to Figure \ref{fig:toy_gen2}.  To obtain the positive pulses, we can take a half-trapezoidal shape of the form in Figure \ref{fig:sawtooth} (Right) and compose it with a sawtooth function having some number $R$ of triangular regions as in Figure \ref{fig:sawtooth} (Middle), possibly using suitable offsets to shift the location.  The negative pulses can be produced similarly, and the two can be added together in the final layer.

    As exemplified in Figure \ref{fig:sawtooth} and proved in \cite{Tel15}, the $R$-piece sawtooth function itself can be implemented by a network with width $O(1)$ and depth $O( \log R )$ when $R$ is a power of two.  
    In our case, the maximal number of such repetitions is $R = \big( \frac{n_0}{k_0} \big)^{k_0-1}$ (at the finest scale), and if this is a power of two, the depth required is $d = O(\log R) = O\big( k_0 \log \frac{n_0}{k_0} \big)$.\footnote{At coarser scales, the required depth for the given mapping may be much smaller than the maximum depth corresponding to the finest scale.  To equalize all of the depths, we can simply let the earliest layers implement the desired mapping, and then repeatedly use the identity map (which can be implemented with a width of two since $\max\{0,z\} + \max\{0,-z\} = z$) in further layers. }  If $R$ is not a power of two, we can round $R$ up to the nearest power of two and extend the $z$ values slightly beyond the interval $[0,1]$ accordingly (see Footnote \ref{foot:domain}).  By the above-mentioned result of \cite{Tel15} on the sawtooth function, the width is a constant multiple of the number of outputs, i.e., $w = O(n)$.

    For part (c), the argument is similar to part (b), but exploits both depth and width to produce the $R$-piece sawtooth function.  We first exploit width similarly to part (a) above: At the finest scale, for a fixed constant $\beta > 0$, we consider producing a sawtooth function $h(\cdot)$ with $\Theta\big(\big( \frac{n_0}{k_0} \big)^{\frac{k_0}{\beta d}}\big)$ repetitions, using a depth-2 ReLU network with width $\Theta\big(\big( \frac{n_0}{k_0} \big)^{\frac{k_0}{\beta d}}\big)$.  Then, we exploit depth similarly to part (b) above: Composing $h(\cdot)$ with itself $\frac{d}{2}$ times, the number of repetitions gets raised to the power of $\Theta(d)$, and we get the desired number $R = \big( \frac{n_0}{k_0} \big)^{k_0-1}$ of repetitions when $\beta$ is suitably chosen.  Since this procedure is used to produce $n$ signals (including those at the coarser scales), the total width required is at most $w = O\big( n \big( \frac{n_0}{k_0} \big)^{\frac{k_0}{\beta d}}\big)$.  Taking the log on both sides gives the desired condition in the theorem statement.
\end{proof}

\begin{figure*}
    \begin{center}
        \includegraphics[width=0.9\textwidth]{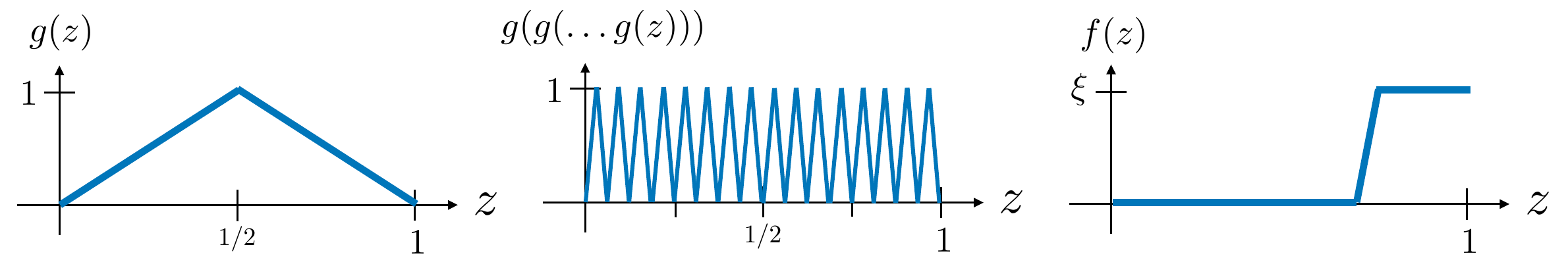}
    \end{center}
    \caption{Depiction of a sawtooth function with one triangular region (Left) and $R = 16$ triangular regions (Middle), along with a function that can be composed with the sawtooth function to produce one or more trapezoidal shapes (Right).} \label{fig:sawtooth} 
\end{figure*}

The final statement of Theorem \ref{thm:wide_deep} reveals that the upper and lower bounds are matching or near-matching:
\begin{itemize}
    \item For the depth-$2$ high-width network, under the mild assumption $\log k \le O\big( k_0 \log \frac{n_0}{k_0} \big)$, the sample complexity is $\Theta\big(k k_0 \log \frac{n_0}{k_0}\big)$ (i.e., the upper and lower bounds match).
    \item For the width-$O(n)$ high-depth network, the sample complexity is between $\Omega\big( k_0 \log \frac{n_0}{k_0} \big)$ and $O\big( k k_0 \log \frac{n_0}{k_0} \cdot \log n \big)$, with the two matching up to a $\log n$ factor.
    \item When both the depth and width are large, the lower bound is again $\Omega\big( k_0 \log \frac{n_0}{k_0} \big)$, and the upper bound has matching behavior whenever $k_0 \log \frac{n_0}{k_0} = \Omega( d \log n )$, while matching up to at most a $\log n$ factor more generally.
\end{itemize}
These findings reveal that the dependencies in Corollary \ref{cor:bora3} are optimal or near-optimal in the absence of further assumptions.

\begin{remark} \label{rem:forall2}
    In the same way as Remark \ref{rem:forall}, our analysis implicitly proves a stronger lower bound that applies even with respect to the average-case performance over some ``hard'' distribution on $\bx^*$ not depending on $\bA$.
\end{remark}

\section{Conclusion} \label{sec:conclusion}

We have established, to our knowledge, the first lower bounds on the sample complexity for compressive sensing with generative models.  To achieve these, we constructed generative models capable of producing group-sparse signals, and then applied minimax lower bounding techniques for group-sparse recovery.  For bounded Lipschitz-continuous generative models we matched the $O(k \log L)$ scaling law derived in \cite{Bor17}, and for ReLU-based generative models, we showed that the dependence of the $O(kd \log w)$ bound from \cite{Bor17} has an optimal or near-optimal dependence on both the depth and width.  A possible direction for future research is to understand what additional assumptions could be placed on the generative model to further reduce the sample complexity.
 

\appendix

\subsection{Covering Numbers} \label{sec:covering}

Let $(\calX,d)$ be a metric space, and fix $\epsilon>0$. A subset $S \subseteq \calX$ is said be an {\em $\epsilon$-net} of $\calX$ if, for all $x \in \calX$, there exists some $s \in S$ such that $d(s,x) \le \epsilon$.
The minimal cardinality of an $\epsilon$-net of $\calX$ (assuming it is finite) is denoted by $\calN^*(\calX,\epsilon)$ and is called the {\em covering number} of $\calX$ (with parameter $\epsilon$).  The following lemma is standard, but a short proof is included for completeness.

\begin{lemma}\label{lem:covering_number}
    {\em (Bound on the Covering Number)}
    The cube $\calX=B_{\infty}^k(r) \in \bbR^k$ equipped with the Euclidean metric satisfies for every $\epsilon > 0$ that 
    \begin{equation}
    \calN^*(\calX,\epsilon)  \le \left(1+\frac{2\sqrt{k}r}{\epsilon}\right)^k. 
    \end{equation}
\end{lemma}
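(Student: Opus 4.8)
The plan is to write down an explicit $\epsilon$-net obtained from a uniform grid on the cube, using only the elementary inequality $\|\bv\|_2 \le \sqrt{k}\,\|\bv\|_\infty$ on $\bbR^k$ to convert the $\ell_\infty$-geometry of a grid cell into the required Euclidean bound.

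First I would fix the cell side length $\ell = \epsilon/\sqrt{k}$ and set $M = \ceil{2r/\ell} = \ceil{2\sqrt{k}\,r/\epsilon}$. Partitioning each coordinate interval $[-r,r]$ into $M$ equal pieces of length $2r/M \le \ell$ and taking the product partition splits $\calX = B_{\infty}^k(r)$ into $M^k$ congruent axis-aligned subcubes that tile $\calX$ exactly; I would then take $S$ to be the set of their centers, so that $S \subseteq \calX$ and $|S| = M^k$.

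The verification is immediate: given $\bx \in \calX$, let $\bs$ be the center of a subcube containing $\bx$; then $\|\bx - \bs\|_\infty \le (2r/M)/2 \le \ell/2$, hence $\|\bx - \bs\|_2 \le \sqrt{k}\,\|\bx - \bs\|_\infty \le \epsilon/2 \le \epsilon$. Thus $S$ is an $\epsilon$-net of $\calX$ in the Euclidean metric and $\calN^*(\calX,\epsilon) \le M^k = \ceil{2\sqrt{k}\,r/\epsilon}^k \le \big(1 + 2\sqrt{k}\,r/\epsilon\big)^k$, using $\ceil{t} \le 1 + t$. There is no real obstacle here; the only points needing a little care are choosing a grid that tiles $[-r,r]^k$ exactly (so that the net points lie in $\calX$, as the definition of an $\epsilon$-net demands, rather than in a slightly larger cube) and accounting for the $\sqrt{k}$ factor incurred when passing from the $\ell_\infty$-radius of a cell to its Euclidean radius.
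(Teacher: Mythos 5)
Your proof is correct, but it takes a genuinely different route from the paper. The paper uses the standard volumetric/packing argument: take a maximal $\epsilon$-separated subset of $\calX$ (which is automatically an $\epsilon$-net), note that the radius-$\frac{\epsilon}{2}$ balls around its points are disjoint and contained in a ball of radius $\sqrt{k}r + \frac{\epsilon}{2}$, and compare volumes. You instead build an explicit net: a uniform grid of $M^k$ subcube centers with $M = \ceil{2\sqrt{k}r/\epsilon}$, using $\|\cdot\|_2 \le \sqrt{k}\|\cdot\|_\infty$ to control the Euclidean diameter of each cell and $\ceil{t} \le 1+t$ to land on the same numerical bound. Your verification is sound (the tiling keeps the centers inside $\calX$, and $2r/M \le \epsilon/\sqrt{k}$ gives $\|\bx-\bs\|_2 \le \epsilon/2$). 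The trade-offs: your construction is more elementary (no volume comparison) and fully explicit, and in fact it is slightly wasteful as written --- you have produced an $\frac{\epsilon}{2}$-net of the claimed cardinality, so doubling the cell side would yield the sharper bound $\calN^*(\calX,\epsilon) \le \left(1+\frac{\sqrt{k}r}{\epsilon}\right)^k$. The paper's packing argument, by contrast, generalizes immediately to sets with no natural grid structure (e.g., $B_2^k(r)$, which is the case actually invoked for Theorem \ref{thm:bora1}) and simultaneously controls the packing number. Either argument suffices for the $\exp\bigl(O(k\log\frac{Lr\sqrt{k}}{\delta})\bigr)$ scaling that the paper needs.
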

\begin{proof}
    Let $\calN^*_\epsilon$ be a maximal $\epsilon$-separated subset of $\calX$. In other words, $\calN^*_\epsilon$ is such that $d(\bx,\by)\ge \epsilon$ for all $\bx,\by \in \calN^*_\epsilon$, $\bx \ne \by$, and no additional points can be added to the set while maintaining $\epsilon$-separation.  The maximality property implies that $\calN^*_\epsilon$ is an $\epsilon$-net of $\calX$, and the separation property implies that the balls of radius $\frac{\epsilon}{2}$ centered at the points in $\calN^*_\epsilon$ are disjoint. On the other hand, for any $\bx \in \calX$, we have $\|\bx\|_\infty \le r$ and thus $\|\bx\|_2 \le \sqrt{k} r$. Hence, all the balls of radius $\frac{\epsilon}{2}$ centered at the points in $\calN^*_\epsilon$ lie within the ball of radius $\big(\sqrt{k} r + \frac{\epsilon}{2}\big)$ centered at zero. Hence, comparing the volumes (which are proportional to the $k$-th power of the radius), we deduce that
    \begin{equation}
    |\calN^*_\epsilon| \le \left(1+\frac{2\sqrt{k}r}{\epsilon}\right)^k,
    \end{equation}
    which implies that $\calN^*(\calX,\epsilon)  \le \big(1+\frac{2\sqrt{k}r}{\epsilon}\big)^k$ as desired. 
\end{proof}

%
%

\subsection{Proof of Lemma \ref{lem:group} (Minimax Lower Bound for Group-Sparse Recovery)}  \label{sec:pf_sparse}

The lower bound for $k$-group-sparse recovery is proved using similar steps to those of regular $k$-sparse recovery.  For the latter setting, the original proof in \cite{Can13} used a standard reduction to multiple hypothesis testing, along with an application of Fano's inequality \cite[Sec.~2.10]{Cov01}.  We instead follow a more recent proof \cite{Duc13} (see also \cite{scarlett2019introductory}) based on a reduction to {\em approximate recovery} in multiple hypothesis testing, as this circumvents the need for an application of the non-elementary matrix Bernstein inequality.

We define the following set of $k$-group-sparse vectors with $\pm 1$ non-zero entries:
\begin{multline}
    \calV = \big\{ \bv \in \{-1,0,1\}^n \,:\, \bv \text{ is } k \text{-group-sparse} \\ \text{with exactly $k$ non-zero entries} \big\}.
\end{multline}
To each $\bv \in \calV$, we associate a vector $\bx_\bv = \xi \bv$;  we will later set $\xi = \sqrt{\frac{n \sigma^2 \log\frac{n}{k}}{4\|\bA\|_\rmF^2}}$ as per the lemma statement.  Letting $d_{\rm H}(\bv,\bv') = \sum_{i=1}^n \boldsymbol{1} \{ v_i \ne v'_i \}$ denote the Hamming distance, we have the following properties:
\begin{itemize}
    \item For each $\bv,\bv' \in \calV$, if $d_{\rm H}(\bv,\bv') > t$, then $\|\bx_\bv - \bx_{\bv'}\|_2 > \xi \sqrt{t}$;
    \item The cardinality of $\calV$ is $|\calV| = 2^k \big(\frac{n}{k}\big)^k$, yielding $\log|\calV| = k \log \frac{2n}{k}$;
    \item Letting $N_{\max}(t)$ be the maximum possible number of $\bv' \in \calV$ such that $d_{\rm H}(\bv,\bv') \le t$ for a fixed $\bv$, a simple counting argument gives $N_{\max}\big( \frac{k}{2} \big) \le \sum_{j=0}^{k/2} 2^j {n \choose j} \le  k \cdot 2^{k/2} \cdot {n \choose k/2}$, which gives 
    \begin{equation}
    \log N_{\max}\bigg( \frac{k}{2} \bigg) \le \log k + \frac{k \log 2}{2} + \frac{k}{2} \log \frac{2en}{k}.
    \end{equation}
    When $n \ge C_0 k$ for some absolute constant $C_0$, this implies that
    \begin{equation}
    \log \frac{|\calV|}{N_{\max}(\frac{k}{2})} \ge \frac{k}{3} \log \frac{n}{k}. \label{eq:V_ratio}
    \end{equation}
\end{itemize}

With these definitions, we can apply the general-purpose minimax lower bound from \cite{Duc13} (based on a form of Fano's inequality with approximate recovery), which reads as follows in our setting.  Here, for compactness, we write $\calM_m(k, \bA,\bar{\calS}_k(\xi)) = \inf_{\hat{\bx}} \sup_{\bx^* \in \bar{\calS}_k(\xi)} \bbE_{\bx^*} \left[\|\bx^*-\hat{\bx}\|_2^2\right]$ for the minimax risk. 

\begin{lemma}
    {\em (Estimation Lower Bound via Fano's Inequality \cite[Cor.~2]{Duc13})}
    Consider the preceding minimax setup with a fixed (deterministic) choice of $\bA$, and let $\calV$ be any set such that for any $\bv,\bv' \in \calV$, we have
    \begin{equation}
        d_{\rm H}(\bv,\bv') > t ~~\implies~~ \| \bx_{\bv} - \bx_{\bv'} \|_2 \ge \epsilon
    \end{equation}
    for some $t > 0$ and $\epsilon > 0$.  Then, the following holds:
    \begin{equation}
        \calM_m(k, \bA,\bar{\calS}_k(\xi)) \ge \Big( \frac{\epsilon}{2} \Big)^2 \bigg( 1 - \frac{I(V;\by) + \log 2}{ \log\frac{|\calV|}{N_{\max}(t)}} \bigg),
    \end{equation}
    where the mutual information is with respect to $V \to \bx_{V} \to \by$ with $V$ uniform on $\calV$ and $\by = \bA\bx_{V} + \bmeta$.
\end{lemma}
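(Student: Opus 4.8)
The plan is to decompose the argument into two standard but distinct pieces: a geometric reduction turning the squared-error estimation problem into an \emph{approximate} test on $V$, and a Fano-type inequality tailored to this approximate notion of success. Throughout, the mutual information $I(V;\by)$ is treated as an opaque quantity to be controlled later.

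First I would fix an arbitrary estimator $\hat{\bx} = \hat{\bx}(\by)$ and round it to the nearest codeword, $\hat{V} = \argmin_{\bv \in \calV}\|\hat{\bx} - \bx_{\bv}\|_2$. By the defining minimality of $\hat{V}$ and the triangle inequality, $\|\bx_{\hat{V}} - \bx_V\|_2 \le \|\bx_{\hat{V}} - \hat{\bx}\|_2 + \|\hat{\bx} - \bx_V\|_2 \le 2\|\hat{\bx} - \bx_V\|_2$, so $\|\hat{\bx} - \bx_V\|_2 < \tfrac{\epsilon}{2}$ implies $\|\bx_{\hat{V}} - \bx_V\|_2 < \epsilon$, which by the contrapositive of the separation hypothesis forces $d_{\rm H}(\hat{V}, V) \le t$. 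Equivalently, $\{d_{\rm H}(\hat{V}, V) > t\} \subseteq \{\|\hat{\bx} - \bx_V\|_2 \ge \tfrac{\epsilon}{2}\}$. Since every $\bx_{\bv}$ lies in $\bar{\calS}_k(\xi)$, the worst-case risk dominates the average over a uniform $V$, and Markov's inequality applied to $\|\hat{\bx}-\bx_V\|_2^2$ then gives
\begin{equation}
    \calM_m(k,\bA,\bar{\calS}_k(\xi)) \ge \Big(\tfrac{\epsilon}{2}\Big)^2 \inf_{\hat{V}}\Pr\big[d_{\rm H}(\hat{V}, V) > t\big].
\end{equation}

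Next I would prove the approximate Fano bound on $P_{\rm err} := \Pr[d_{\rm H}(\hat{V}, V) > t]$. Introducing the indicator $E = \boldsymbol{1}\{d_{\rm H}(\hat{V}, V) > t\}$, the entropy chain rule yields $H(V \mid \hat{V}) \le H(E) + H(V \mid \hat{V}, E)$, where $H(E) \le \log 2$. On the event $E = 0$ the true $V$ lies in a Hamming ball of radius $t$ about $\hat{V}$ and thus takes at most $N_{\max}(t)$ values, while on $E=1$ it takes at most $|\calV|$ values, so $H(V \mid \hat{V}, E) \le (1-P_{\rm err})\log N_{\max}(t) + P_{\rm err}\log|\calV|$. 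Using $H(V) = \log|\calV|$ (uniformity) and the data-processing inequality along $V \to \by \to \hat{V}$, we have $H(V \mid \hat{V}) = \log|\calV| - I(V;\hat{V}) \ge \log|\calV| - I(V;\by)$. Combining the two estimates and collecting the terms multiplying $\log\frac{|\calV|}{N_{\max}(t)}$ gives $(1-P_{\rm err})\log\frac{|\calV|}{N_{\max}(t)} \le I(V;\by) + \log 2$, that is $P_{\rm err} \ge 1 - \frac{I(V;\by)+\log 2}{\log\frac{|\calV|}{N_{\max}(t)}}$; substituting into the reduction finishes the proof.

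The step I expect to be the crux is the approximate Fano inequality rather than the reduction: the essential trick is to condition $H(V \mid \hat{V})$ on the error indicator $E$, so that on the success event $V$ is confined to a set of size $N_{\max}(t)$ rather than all of $\calV$. This is precisely what makes an \emph{approximate}-recovery guarantee usable here, since the codewords $\{\bx_{\bv}\}$ are too densely packed for an exact-recovery Fano argument to give anything nontrivial; one must also ensure $\log\frac{|\calV|}{N_{\max}(t)} > 0$, which holds in our application by the counting bound \eqref{eq:V_ratio}. The mutual information $I(V;\by)$ is deliberately left intact at this stage and is bounded separately (in terms of $\bA$ and $\xi$) in the remainder of the proof of Lemma~\ref{lem:group}.
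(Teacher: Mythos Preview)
Your proof is correct and is essentially the standard argument behind the cited result. Note that the paper does not actually prove this lemma; it invokes it directly as \cite[Cor.~2]{Duc13}, so there is no paper-side proof to compare against beyond the reference itself. Your two-step decomposition (rounding to the nearest codeword plus the approximate-recovery Fano inequality obtained by conditioning on the error indicator $E$) is precisely the argument underlying that corollary, and every step checks out, including the use of data processing along $V\to\by\to\hat{V}$ and the Markov/averaging step linking the minimax risk to $P_{\rm err}$.
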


Substituting the above-established values $t = \frac{k}{2}$, $\epsilon = \xi \sqrt{ \frac{k}{2} }$, and \eqref{eq:V_ratio} into this result, we obtain
\begin{equation}
\calM_m (k, \bA,\bar{\calS}_k(\xi)) \ge \frac{k \xi^2}{8} \bigg( 1 - \frac{I(V;\by) + \log 2}{\frac{k}{3}\log\frac{n}{k}} \bigg).
\end{equation}
Using the assumption of Gaussian noise, it is shown in~\cite[Eq.~(87)]{scarlett2019introductory} (based on a similar result in \cite[p.~7]{Duc13}) that
\begin{equation}
I(V;\by) \le \frac{\xi^2}{2\sigma^2} \bbE\big[ \| \bA V \|_2^2 \big]. \label{eq:mi0}
\end{equation}
We claim that, as is the case for regular $k$-sparse signals in \cite{Duc13,scarlett2019introductory}, we have $\mathrm{Cov}[V] = \frac{k}{n} \bI_n$. Assuming this to be true momentarily, we observe that \eqref{eq:mi0} gives $I(V;\by) \le \frac{\xi^2}{2\sigma^2} \cdot \frac{k}{n} \|\bA\|_\rmF^2$, which yields 
\begin{equation}
    \calM_m (k, \bA,\bar{\calS}_k(\xi)) \ge \frac{k\xi^2}{8} \bigg( 1 - \frac{\frac{\xi^2}{2\sigma^2} \cdot \frac{k}{n} \|\bA\|_\rmF^2 + \log 2}{\frac{k}{3}\log\frac{n}{k}} \bigg).
\end{equation}
Setting $\xi = \sqrt{ \frac{n \sigma^2 \log \frac{n}{k}}{4 \|\bA\|_\rmF^2 } }$, we find that the bracketed term is lower bounded by $\frac{1}{2}$ when $n \ge C_0 k$ for an absolute constant $C_0$,\footnote{To see this, note that the bracketed term would {\em exactly} equal $\frac{1}{2}$ if the $\log 2$ term were dropped and we were to replace $\frac{k}{3}$ by $\frac{k}{4}$ in the denominator.} and this yields \eqref{eq:minimax_sigma}.  

To check that $\mathrm{Cov}[V] = \frac{k}{n} \bI_n$, due to symmetry we only need to check a single diagonal term and two cross-terms -- one cross-term for $V_i$ and $V_j$ in the same group of size $\frac{n}{k}$ in the group-sparse signal $V = (V_1,\dotsc,V_n)$, and another cross-term for $V_i$ and $V_j$ in different groups.  

Firstly, note that $\bbE[V_i] = 0$ for each $i=1,\dotsc,n$, since each entry is equally likely to be $+1$ or $-1$.  For the diagonal, since $V_1 \in \{-1,0,1\}$, we have $\bbE[V_1^2] = \bbP[V_1 \in \{-1,1\}] = \frac{k}{n}$, since each of the indices in $\{1,\dotsc,\frac{n}{k}\}$ is equally likely to be the non-zero one. For the first cross-term, note that if $i$ and $j$ are in the same group then $V_i V_j = 0$ with probability one, since each group has only one non-zero entry.  On the other hand, if $i$ and $j$ are in different groups then $V_i$ and $V_j$ are independent (since the non-zero entry of each group can be viewed as being selected independently of all other groups), so $\bbE[V_i V_j] = \bbE[V_i] \bbE[V_j] = 0$.

\subsection{Definitions for ReLU Networks}
\label{sec:relu_defs}

The ReLU activation function is defined as $\sigma(z) = \max\{z,0\}$. A neural network generative model $G: \bbR^{k} \rightarrow  \bbR^n$ with $d$ layers can be written as 
\begin{equation}
 G(\bz) = \phi_d\left(\phi_{d-1}\left(\cdots \phi_2( \phi_1(\bz,\btheta_1), \btheta_2)\cdots, \btheta_{d-1}\right), \btheta_d\right),
\end{equation}
where $\phi_l(\cdot)$ is the functional mapping corresponding to the $l$-th layer, and $\btheta_l = (\bW_l,\bb_l)$ is the parameter pair for the $l$-th layer:  $\bW_l \in \bbR^{n_l \times n_{l-1}}$ is the matrix of weights, and $\bb_l \in \bbR^{n_l}$ is the vector of offsets, where $n_l$ is the number of neurons in the $l$-th layer.  Note that $n_0 = k$ and $n_d = n$. Defining $\bz^0 = \bz$ and $\bz^l= \phi_l(\bz^{l-1},\btheta_l)$, if we set $\phi_l(\bz^{l-1},\btheta_l) = \sigma(\bW_l \bz^{l-1}+ \bb_l)$ for $l = 1,2,\ldots,d$ (where the ReLU operation $\sigma(\cdot)$ is applied element-wise), then the network is called a ReLU network.  The number of layers $d$ is also known as the depth, and the maximum layer size $w = \max_{l} n_l$ is known as the width.  In this paper, we only require fairly simple results about the representation power of ReLU networks from \cite{Tel15,Tel16,Aro18}, though the literature on this topic extends far beyond these works (e.g., see \cite{Per19,Cha20} and the references therein).

\section*{Acknowledgment}

We thank Eric Price for helpful discussions on the notions of ``for-each'' vs.~``for-all'' recovery guarantees, which we address in Remarks \ref{rem:forall} and \ref{rem:forall2}.  We also thank Ioannis Panageas for informing us of the construction of sawtooth functions via ReLU networks when both the depth and width are large, which we use in the proof of Theorem \ref{thm:wide_deep}.

\bibliographystyle{IEEEtran}
\bibliography{techReports,JS_References}

\begin{IEEEbiographynophoto}{Zhaoqiang Liu} was born in China in 1991. He is currently a research fellow in School of Computing at the National University of Singapore (NUS). He received the B.Sc.\ degree in Mathematics from the Department of Mathematical Sciences at Tsinghua University (THU) in 2013 and the Ph.D.\ degree in Mathematics from the Department of Mathematics at NUS in 2017. His research interests are in machine learning, including unsupervised learning such as matrix factorization and deep learning. 
 \end{IEEEbiographynophoto}

\begin{IEEEbiographynophoto}{Jonathan Scarlett}
     (S'14 -- M'15) received 
     the B.Eng. degree in electrical engineering and the B.Sci. degree in 
     computer science from the University of Melbourne, Australia. 
     From October 2011 to August 2014, he
     was a Ph.D. student in the Signal Processing and Communications Group
     at the University of Cambridge, United Kingdom. From September 2014 to
     September 2017, he was post-doctoral researcher with the Laboratory for
     Information and Inference Systems at the \'Ecole Polytechnique F\'ed\'erale
     de Lausanne, Switzerland. Since January 2018, he has been an assistant
     professor in the Department of Computer Science and Department of Mathematics,
     National University of Singapore. His research interests are in
     the areas of information theory, machine learning, signal processing, and
     high-dimensional statistics. He received the Singapore National Research Foundation (NRF) fellowship, and the NUS Early Career Research Award.
 \end{IEEEbiographynophoto}

\end{document}